\newcommand{\matr}[1]{\mathbf{#1}}
\renewcommand{\vec}[1]{\mathbf{#1}}
\DeclareMathOperator{\p}{\mathbb{P}}
\DeclareMathOperator*{\plim}{plim}
\newcommand{\E}{{\rm I}\kern-0.18em{\rm E}}
\newcommand{\N}{{\rm I}\kern-0.18em{\rm N}}
\newtheorem{theorem}{Theorem}
\newtheorem{example}{Example}
\newtheorem{lemma}{Lemma}
\newtheorem{definition}{Definition}
\newtheorem{proposition}{Proposition}
\newtheorem{assumption}{Assumption}
\newtheorem{property}{Property}
\newtheorem{remark}{Remark}[section]
\newcommand{\qed}{\hfill \ensuremath{\Box}}
\begin{document}

\title{Degree-Weighted Social Learning}

\author{Chen Cheng, Xiao Han, Xin Tong, Yusheng Wu, and Yiqing Xing%
\thanks{
Cheng: Johns Hopkins University, chencheng.ccer@gmail.com; Han: Department of Statistics and Finance, School of Management, University of Science and Technology of China, xhan011@ustc.edu.cn; Tong: Faculty of Business and Economics, the University of Hong Kong, xint@hku.hk, on leave from the Marshall School of Business, University of Southern California; Wu: University of Southern California, yushengmath@gmail.com; Xing: China Center for Economic Research, NSD, Peking University, xingyq@gmail.com. 
An earlier version of this paper was titled ``From Authority-Respect to Grassroots-Dissent: Degree-Weighted Learning and Convergence Speed.'' 
We thank 
Alexander Wolitzky, three anonymous referees, Ben Golub, Matt Jackson, Steven Heilman, David Miller, and participants at numerous conferences and workshops 
for their invaluable suggestions and expert advice, and Qiming Sun for careful proofreading of the paper.  Funding acknowledgment: Discovery Awards from JHU (Cheng), NSFC No.12371278 (Xiao), STARS funding from USC and seed funding from HKU (Tong), and seed funding from PKU (Xing).
All remaining errors are our own.}
}

\date{December 25, 2025}

 \pubMonth{Month}
 \pubYear{Year}
 \pubVolume{Vol}
 \pubIssue{Issue}

\begin{abstract}

We study social learning in which agents weight neighbors' opinions differently based on their degrees, capturing situations in which agents place more trust in well-connected individuals or, conversely, discount their influence. We derive asymptotic properties of learning outcomes in large stochastic networks and analyze how the weighting rule affects societal wisdom and convergence speed. We find that assigning greater weight to higher-degree neighbors harms wisdom but has a non-monotonic effect on convergence speed, depending on the diversity of views within high- and low-degree groups, highlighting a potential trade-off between convergence speed and wisdom. (JEL D83, D85, Z12)

\end{abstract}

\maketitle

\section{Introduction}
\label{sec:intro}

The DeGroot model of social learning in networks has been instrumental in studying societal belief dynamics, such as whether society reaches consensus \citep{degroot1974reaching}, agents' network position and their long-run influence on consensus \citep{demarzo2003persuasion}, the ``wisdom of the crowd'' phenomenon \citep{acemoglu2010spread, GolubJackson2010}, and the speed of belief convergence \citep{GolubJackson2012}. In a typical DeGroot learning model, agents repeatedly aggregate the opinions of their neighbors according to fixed weights over time. To gain tractability—especially when examining learning dynamics in large random networks—agents are often assumed to assign equal weight to the opinions of all their neighbors \citep{GolubJackson2012}. However, the equal-weighting DeGroot model, while a useful benchmark, lacks certain empirically relevant elements. For example, it cannot capture attitudes toward popularity, where society may overweight the opinions of highly connected agents, believing that these individuals are exposed to a larger volume of information. Conversely, equal weighting fails to capture sentiment resistant to popularity, where society may underweight the opinions of highly connected agents, perceiving them as having more redundant or less trustworthy information sources.\footnote{In practice, platforms may intentionally downweight highly popular users to provide ``less-prominent'' users with more visibility. For example, Facebook's News Feed algorithm prioritizes content from friends, family, and close connections over posts from popular pages or celebrities. Similarly, Amazon’s initiatives, such as Amazon Launchpad and A+ Content, support new sellers and startups by offering enhanced product descriptions and media, thereby increasing their exposure and credibility. This approach improves the chances for smaller sellers to compete against established brands.}\footnote{The association between degree and low weighting does not need to be causal for the theory to hold. A high degree alone does not necessarily justify assigning lower weight to someone's influence. Instead, it may be that individuals with certain characteristics (e.g., public figures with broad but less specialized appeal) tend to accumulate a high number of in-links, leading others to assign lower weight to their opinions. We thank the anonymous referee for suggesting this point.}

This paper develops a degree-weighted DeGroot model in which agents assign weights to their neighbors' opinions based on their degree. We allow the weight to increase with the degree of a neighbor, capturing the notion of \emph{popularity-favoring}; conversely, a decreasing weight reflects the sentiment of \emph{popularity-resistance}.\footnote{In our setting, when the weight is independent of degree, the model reduces to the classical equal-weighting case.} Our main contribution is an asymptotic theorem showing that, in our degree-weighted random learning framework, the realized learning outcomes—the consensus belief and the convergence speed—converge to those generated by the ``expected'' learning matrix.\footnote{Previous works, among which \cite{GolubJackson2012}, \cite{dasaratha2020distributions} and \cite{mostagir2021centrality} are the closest, do not address the convergence of highly nonlinear terms arising from degree-dependent weighting.}
Building on this result, we extend the standard equal-weighting DeGroot learning framework to degree-weighted scenarios, introducing a new dimension of comparative statics and an additional degree of freedom in degree dependence, which was previously difficult to explore.

\paragraph{Wisdom} By fully characterizing agents' influence—i.e., how their initial opinions shape the consensus belief—and the consensus belief itself, we apply the necessary and sufficient condition for a society to be wise, defined as the consensus belief uncovering the true state, as established in \cite{GolubJackson2010}. We find that degree-dependent weighting can shift a society between wisdom and its absence.

More specifically, we show that when agents' initial beliefs equal the true state plus i.i.d.\ noise, degree dependence has a \emph{monotonic} effect on societal wisdom. A high degree of popularity-favoring—meaning agents assign greater weight to high-degree neighbors—grants high-degree agents non-vanishing influence and renders the society unwise.


\paragraph{Convergence Speed} 
The degree dependence can have a highly non-monotonic impact on the speed at which societal beliefs reach consensus. We find that when agents exhibit a greater degree of popularity-favoring, society does not necessarily reach consensus faster. Whether increasing the degree of popularity-favoring accelerates or slows convergence depends on the diversity of views within high-degree and low-degree groups. 
When the number of low-degree groups exceeds that of high-degree groups, and different groups may hold different views, increasing the degree of popularity-favoring may accelerate convergence; the reverse occurs when high-degree groups outnumber low-degree groups. 

\paragraph{Trade-off Between Wisdom and Convergence Speed} These forces can create a trade-off between faster convergence and a ``wiser'' society: while a more popularity-resistant society may foster greater wisdom, reaching it can take longer when many low-degree groups exist, as each group potentially holds different views. Similarly, a more popularity-favoring learning rule may accelerate belief convergence at the expense of wisdom. The following table illustrates such a trade-off.\footnote{We discuss the theoretical results underlying this table in more detail in Section~\ref{sec:tradeoff}.}





\begin{table}[ht]
\centering
\caption{Trade-off between Wisdom and Convergence Speed.}
\renewcommand{\arraystretch}{1.3}

\begin{tabular}{ccccc}
\hline\hline

\multicolumn{2}{c}{{\# Groups}} && \multicolumn{2}{c}{Degree Dependence}  \\ \cline{1-2} \cline{4-5}
\multirow{2}*{Low}  & \multirow{2}*{High}  && {Popularity-resistant} & {Popularity-favoring} \\
&  & &  ($\alpha \rightarrow -\infty$) & ($\alpha \rightarrow \infty$)
\\ \hline
1 & 1 && wise, fast & unwise, fast  \\ 
multiple & 1 && wise, slow  &  unwise, fast \\ 
1 & multiple && wise, fast  & unwise, slow  \\ \hline
\end{tabular}

\label{tab:wisdom_speed}
\begin{minipage}{0.95\textwidth}
\smallskip
\footnotesize \textit{{Notes:}} {The first two columns report the number of low- and high-degree groups. 
The parameter $\alpha$ governs the weight that agents place on high-degree neighbors: $\alpha \rightarrow -\infty$ corresponds to an extremely popularity-resistant learning rule, while $\alpha \rightarrow \infty$ corresponds to an extremely popularity-favoring rule. 
The entries summarize whether society converges to wisdom and whether convergence is fast or slow under the corresponding regime.}
\end{minipage}
\end{table}

\subsection*{Literature Review}
\label{sec:lit}

Our paper relates to a substantial body of literature on DeGroot learning in social networks. \cite{degroot1974reaching} was among the first to examine naïve learning, where agents repeatedly aggregate their neighbors' opinions according to a fixed set of weights, and identified conditions for belief convergence to consensus.\footnote{Early contributions include \cite{french1956formal} and \cite{harary1959measurement}. For a recent survey on social learning and a discussion of the distinction between DeGroot and Bayesian models, see \cite{golubs2016}. Subsequent work has extended the DeGroot paradigm to incorporate time-varying weights \cite{chatterjee1977towards}, initial biases or prejudices over agents' prior beliefs \cite{friedkin1990social}, and opinion aggregation restricted to neighbors whose beliefs lie within a certain distance from one's own \cite{Hegselmann2002}, among other variations.} \cite{demarzo2003persuasion} later revisited the DeGroot model in an economic context, grounding naïve learning in a quasi-Bayesian framework and highlighting the role of persuasion bias in belief aggregation, along with other important insights.  
\cite{gao2021centrality} examined centrality-weighted opinion dynamics but did not analyze the impact of such degree dependence on wisdom and convergence speed.  While these studies assumed deterministic networks, we analyze the DeGroot model in large stochastic networks. 

\cite{acemoglu2010spread} and \cite{GolubJackson2010} studied belief convergence to the true state in stochastic networks. \cite{acemoglu2010spread} analyzed how ``forceful agents'' and network structure influence the gap between the consensus belief and the true state, showing that this gap shrinks as beliefs converge faster. \cite{GolubJackson2010} identified conditions for societal wisdom under DeGroot learning and found {logical independence} between wisdom and convergence speed. Our framework complements this literature by demonstrating that a third factor—degree dependence—affects both wisdom and convergence speed, creating a trade-off between the two.  

Most relatedly, \cite{GolubJackson2012} examined how network features affect convergence speed in large random networks, finding that homophily slows belief convergence. They derived the large-network property of the second-largest eigenvalue in modulus (SLEM), which is inversely related to convergence speed, for equal-weighting learning matrices. We extend their analysis to degree-dependent weighting matrices.

\cite{dasaratha2020distributions} provided asymptotic characterizations of eigenvector and Katz-Bonacich centralities, while \cite{mostagir2021centrality} extended this analysis to other centrality measures, showing that, with high probability, these measures approximate their values in a well-defined ``average'' network. Unlike their focus on the centrality measures for the adjacency matrix, our work investigates the asymptotic properties of the consensus belief and the second-largest eigenvalue in modulus (SLEM) of the learning matrix, which includes nonlinear components arising from degree-dependent weighting.

\cite{cerreia2024dynamic} extended the analysis to a broad class of non-Bayesian belief-updating rules, with DeGroot learning as a special case, and characterized conditions for convergence, consensus, and the wisdom of crowds. They showed that society attained wisdom when the SLEM fell below a certain threshold. While their focus was on generalizing DeGroot learning and identifying conditions for wisdom in that setting, our analysis instead studies the large-sample behavior of the SLEM and the role of degree-dependent weighting.

Last but not least, our study is related to a large body of mathematical literature that examines the convergence and mixing time of Markov chains. We discuss our relation to this literature alongside our technical discussion in the paper. Most interestingly, \cite{sinatra2011maximal} found that constructing maximal-entropy random walks requires step probabilities to be proportional to a power of the degree of the target node, which is similar to the degree-dependent weighting in our paper.  

\section{Model: Networks and Degree-Weighted Learning}
\label{sec:setup}

Consider a network with $n$ agents. The associated $n\times n$ adjacency matrix $\matr{A}$ is defined by: $A_{ij} = A_{ji} = 1$, if there is a link between agents $i$ and $j$; otherwise, $A_{ij}=A_{ji}=0$.\footnote{This suggests that the network is undirected. We discuss the implications of this assumption, as well as the directed case, in Section \ref{sec:discussion}.} Let $d_i(\matr{A}) = \sum_j A_{ij}$ be the degree of agent $i$, i.e., the number of links (friends) $i$ has.

\paragraph{DeGroot Learning Model} In the classical \cite{degroot1974reaching} learning model, an agent's belief such as the probability of global warming or the safety of a vaccine, is a simple linear aggregation of her neighbors' beliefs in the last period. In other words, the belief vector $\matr{b}(t)$ at time $t\in 	\mathbb{N}=\{0,1,\ldots\}$ is given by:
$$\vec{b}(t) = \matr{T}^{t}\vec{b}_0\,,$$ 
in which $\vec{b}_0\in \mathbb{R}^n$ is an initial belief vector and $\matr{T}$, the learning matrix, is an $n \times n$ row-stochastic matrix, with $T_{ij}$ representing the weight that agent $i$ puts on the belief of agent $j$.

\paragraph{Equal-Weighting Benchmark} Typically, agents are assumed to assign equal weight to their neighbors (e.g., \citealp{GolubJackson2010, GolubJackson2012}). Under this assumption, the learning matrix $\matr{T}$ takes the form 
\(
T_{ij} = {A_{ij}}/{d_i(\matr{A})}.
\)

\paragraph{Degree-Weighted Updating} While equal-weighting provides a convenient and tractable benchmark, it falls short of capturing certain real-life patterns discussed in the introduction. These observations suggest a learning heuristic in which the updating process depends on the popularity, or degree, of each source.

To capture these phenomena, we  introduce a degree-weighted learning matrix, \textbf{T}, as follows:
\begin{equation}
T_{ij} = \frac{A_{ij}d_j^\alpha}{\sum\limits_{j'} A_{ij'}d_{j'}^\alpha}\label{NewT}\,,
\end{equation}
where $d_j = d_j(\matr{A})$ is agent $j$'s degree in $\matr{A}$ and $\alpha \in \mathbb{R}$ captures how a neighbor's weight depends on their degree.%
\footnote{We note two points here. First, we allow for more general functional forms of degree dependence, $\phi(\alpha, d) \colon \mathbb{R}^2 \rightarrow \mathbb{R}$, as long as they satisfy certain regularity conditions. We refer readers to  Appendix \ref{subsec:generalized_degree_dependence} for a more detailed discussion. Second, we acknowledge empirical evidence (e.g., \citealp{breza2018seeing}) suggesting that agents may have limited knowledge of network structures and, in particular, may not know their neighbors' degrees exactly. We leave it to future work to explore the implications of this type of uncertainty on social learning.} We have the following cases: 
\begin{itemize}
    \item $\alpha = 0$ (equal-weighting): the benchmark case studied in the literature;
    \item $\alpha > 0$ (popularity-favoring): weights increase with a neighbor’s degree;
    \item $\alpha < 0$ (popularity-resistant): weights decrease with a neighbor's degree.
\end{itemize}

\paragraph{Random Networks: Stochastic Block Model}
Although DeGroot learning with deterministic networks has been extensively studied in the literature, its properties under random networks are less well understood. Random networks, however, offer distinct appeal and advantages over deterministic ones.\footnote{For example, researchers often face data limitations and rely on statistical models of network formation, where links are probabilistically determined (see \citealp{dasaratha2020distributions} for more examples). Moreover, in real-world settings, agents may encounter uncertainty regarding whom they will be matched with, which is better modeled using random networks (as in \citealp{acemoglu2010spread}).} We use the stochastic block model to generate random networks in our paper.\footnote{See \cite{Holland1983} for a classical discussion and \cite{lee2019review} for a recent survey on its development.}

\begin{figure}
\includegraphics[width=80mm]{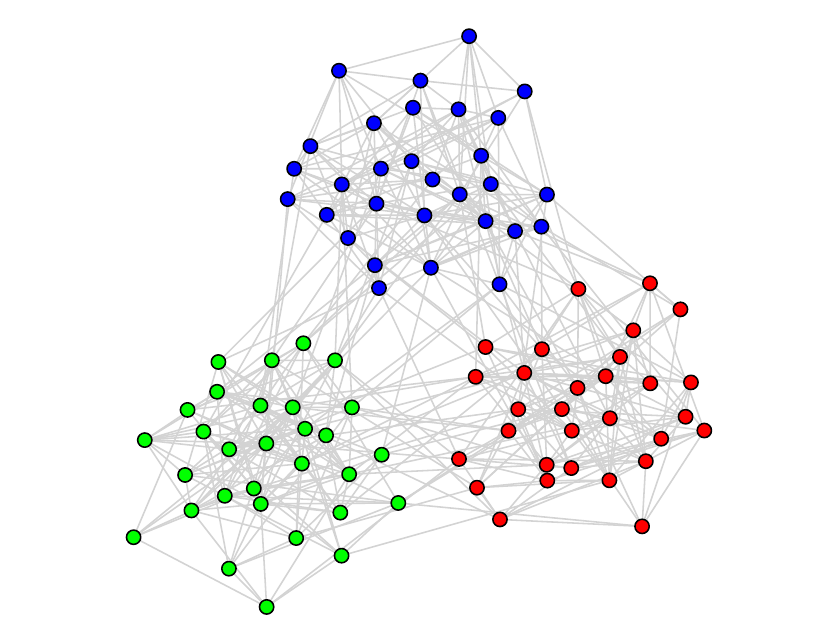}
\caption{A Network Generated by Stochastic Block Models} 

\begin{figurenotes}
A network generated by a stochastic block model with $3$ groups, $30$ agents in each. The linking probability between agents $i$ and $j$ is given by
$p_{ij} = 0.3$ if $i$ and $j$ belong to the same group, and $0.04$ otherwise.
\end{figurenotes}
\end{figure}


Suppose there are $n$ agents divided into $m$ groups. Let $\vec{n} = (n_1,\ldots,n_m)$, with $n_k$ being the number of agents in group $k\in \{1,\ldots,m\}$. 
Let \textbf{P} be an $m \times m$ symmetric matrix
\begin{equation}
\label{eq:P}
\textbf{P} = \begin{bmatrix}
p_{11} & p_{12} & \ldots & p_{1m}\\
 p_{21} & p_{22} & \ldots & p_{2m}\\
\vdots & \vdots &\ddots &\vdots\\
p_{m1} & p_{m2} & \ldots & p_{mm}
\end{bmatrix}\,,
\end{equation}
where {$p_{ij}\geq 0$} represents the linking probability between any agent in group $i$ and any agent in group $j$.

Let $\matr{A}(\matr{P}, \matr{n})$ denote a random matrix generated by the corresponding stochastic block model. The entry $A_{ij}$ represents the element in the $i$-th row and $j$-th column of $\matr{A}$. Each $A_{ij}$ is an independent Bernoulli random variable, where $A_{ij} \sim \mathrm{Bernoulli}(P_{kl})$, for $i \neq j$, and agents $i$ and $j$ belong to groups $k$ and $l$, respectively.

Let $\matr{R}=(R_{ij})=\E \matr{A}$ be the corresponding {\em expected adjacency matrix}. 
The {\em expected degree} of a node $i$ in some group $k_i$ is
$$d_i(\matr{R}) = \sum_{l=1,\ldots,m} p_{k_i l} n_l.
$$

Together with any $\alpha$, we define the {\em expected learning matrix}, $\matr{T}^*(\matr{R},\alpha)$,\footnote{We note here that the role of matrices \(\matr{R}\) and \(\matr{T}^*\) is purely instrumental. They do not represent the actual network or the updating process but instead provide a tractable approximation that facilitates the analysis of the expectation and limiting behavior of the opinion dynamics.} 
\begin{equation}\label{NewT_R}
T^*_{ij} = \frac{R_{ij}d_j^\alpha(\matr{R})}{\sum\limits_{j'}
R_{ij'}d_{j'}^\alpha(\matr{R})}\,, \forall i,j.
\end{equation}



We adopt standard definitions of convergence, consensus, and convergence speed as follows:


\begin{definition}[Belief Convergence and Consensus]  
Beliefs are said to \textit{converge} if, for any initial belief profile \( b(0) \), the sequence \( b(t) \) satisfies \(\lim_{t \rightarrow \infty} b_i(t)\) exists for all \( i \).  
The network reaches \textit{consensus} if beliefs converge to a common value, that is, \(\lim_{t \rightarrow \infty} b_i(t) = b^*\) for all \( i \), where \( b^* \) is a constant independent of \( i \).  
\end{definition}

\begin{definition}[Convergence Speed]
\label{def:convergence_speed}
Denote \( \matr{T}^{\infty} = \lim_{t\rightarrow\infty}\matr{T}^{t} \). 
Then, at time \( t \), the convergence speed is measured by
\begin{equation}
\max\limits_{\|\vec{b}_0\|_{\matr{s}}=1} \| (\matr{T}^t - \matr{T}^{\infty}) \vec{b}_0 \|_{\matr{s}},
\end{equation}
in which 
$
\|\mathbf{v}\|_{\mathbf{s}}=\sqrt{\sum_{i=1}^n s_i\mathbf{v}_i^2},
$
where $\mathbf{v}_i$ is the $i$-th entry of $\mathbf{v}$ and $s_i$ is agent $i$'s influence to be defined in Equation \eqref{eq_influence}.
\end{definition}

This expression quantifies the largest possible deviation between beliefs at time \( t \) and the limiting belief, across all initial beliefs whose $\|\cdot\|_{\matr{s}}$ norm is standardized to 1.\footnote{The constraint $\|\mathbf{b}_0\|_{\matr{s}} = 1$ is a standardization. Any initial belief $\mathbf{b}_0$ can be expressed as $\mathbf{b}_0 = \|\mathbf{b}_0\|_{\matr{s}} \cdot \mathbf{b}_0 / \|\mathbf{b}_0\|_{\matr{s}}$. Therefore, this effectively allows for any choice of initial belief.} A larger value indicates slower convergence.\footnote{The use of the $\|\cdot\|_{\matr{s}}$ norm is inspired by a similar norm in \cite{GolubJackson2012}. It places greater weight on discrepancies in the positions of more influential agents relative to the limit. We adopt the $\|\cdot\|_{\matr{s}}$ norm because, in the ${\matr{s}}$-weighted $L^2$ space, the operator $\matr{T}$ is self-adjoint and its eigenvectors are orthogonal, allowing us to apply standard techniques as in \citet{diaconis1991geometric} to prove the second part of Proposition \ref{prop_expected}.}

\begin{definition}A network is  \emph{strongly connected} if there exists a path connecting any two nodes of the network. A network is \emph{aperiodic} if there is no integer $k>1$ dividing the length of every cycle of the network. A nonnegative square matrix $\matr{M}$ is \emph{primitive} if there exists a positive integer $k$ such that $\matr{M}^k$ is strictly positive (i.e., all entries of $\matr{M}^k$ are greater than zero).   
\end{definition}

It is a standard result that a strongly connected network is primitive if and only if it is aperiodic (see, e.g., Theorem~1.4 in \citealp{seneta2006non}).

Lastly, we impose the following assumptions to establish our asymptotic results. We let the number of groups, $m$, be fixed as $n \rightarrow \infty$.


\begin{assumption}[Density]
\label{ass:density}
\(
\lim_{n\rightarrow \infty}{\tau_n}/{\sqrt{{(\log n)}/{n}}} = \infty \,,
\) where $\tau_n = \min_i {d_{i}(\matr{R})}/{n}$.
\end{assumption}

 \begin{assumption}[{Comparable degrees}]
\label{ass:cdensities}
{
\(
{\max_{1\le i,j\le m}\{p_{ij}\}}/{\tau_n}<\infty\,.
\)}
\end{assumption}

\begin{assumption}[Bounded spectral gap]
\label{ass:connection}
   $\limsup_{n\rightarrow\infty}|\lambda_2(\matr{T}^*)|<1\,.$
\end{assumption}

Assumption~\ref{ass:density} requires that the graph is not too sparse; in particular, the minimal degree must be significantly larger than $\sqrt{n \log n}$. 
Assumption~\ref{ass:cdensities} requires that the degrees in the network are comparable.\footnote{Assumption~\ref{ass:cdensities} aligns with Condition~(4) in Definition~3 of \cite{GolubJackson2012}.
Assumption~\ref{ass:density} differs from Condition~(1) in Definition~3 of \cite{GolubJackson2012}, as it is specifically needed to prove concentration of nonlinear terms, such as $d_j^\alpha$.}
Assumption~\ref{ass:connection} is similar to Condition~(3) in Definition~3 of \cite{GolubJackson2012},  ensuring that $\matr{A}$ is primitive with high probability and that $\matr{R}$ is primitive for sufficiently large $n$.\footnote{See Appendix \ref{sec:Tlimit} for a formal statement and proof. Because $\lambda_2$ determines convergence speed (see Proposition~\ref{prop_expected}), this assumption prevents convergence from becoming arbitrarily slow as the network grows. It is stated as a condition on eigenvalues; we provide alternative sufficient conditions based solely on model primitives in Online Appendix O4.
}
All three assumptions ensure that, as networks grow, our characterization of learning outcomes in the deterministic case (as captured by Proposition~\ref{prop_expected}) holds with high probability, and that learning outcomes in realized networks are sufficiently close to those in expected networks (Theorem~\ref{thm:asy}).


\section{General Results}
\label{sec:result}

In this section, we present our main theoretical result for the degree-weighted updating rule: an asymptotic theorem showing that, as the network grows large, the learning outcomes—specifically, the consensus belief and the convergence speed of the realized learning matrix $\matr{T}(\matr{A};\alpha)$—converge to those of the expected learning matrix $\matr{T}^*(\matr{R};\alpha)$. This result allows us to focus on the expected case, $\matr{T}^*(\matr{R};\alpha)$, when discussing the comparative statics of learning outcomes in Section~\ref{sec:app}.  

As preparation, we begin by characterizing consensus beliefs and convergence speed for deterministic networks, with the expected network as a special case. The first part of Proposition \ref{prop_expected} adapts the standard characterization of consensus beliefs in DeGroot learning to our degree-weighted setting, and the second part generalizes Lemma 2 of \cite{GolubJackson2012}.

\begin{proposition}[Learning Outcomes for Deterministic and Expected Matrices]\label{prop_expected}
Given any primitive matrix $\matr{A}$ such that $\min_i d_i(\matr{A})>0$,\footnote{The expected adjacency matrix \( \matr{R} = \E \matr{A} \), generated by the stochastic block model, is primitive under Assumptions~\ref{ass:density}–\ref{ass:connection} (see Proposition~\ref{remark_connect} in the Appendix). Therefore, in this proposition we can replace \( \matr{A} \) with \( \matr{R} \) and \( \matr{T} \) with \( \matr{T}^* \). Assumptions~\ref{ass:density}–\ref{ass:connection} also imply that \( \matr{A} \) is primitive with probability tending to one, which in turn implies that the precondition for Proposition~\ref{prop_expected} holds with high probability.}  
we have:
\begin{enumerate}
\item Beliefs converge to a consensus \(b(\infty)=(s_1,\ldots,s_n)\vec{b}_0\), where agent \(i\)'s influence \(s_i\) is
\begin{equation}\label{eq_influence}
s_{i}=\frac{\sum\limits_j A_{ij}\, d_{j}^{\alpha}(\matr{A})\, d_i^{\alpha}(\matr{A})}{\sum\limits_{i',j} A_{i'j}\, d_j^{\alpha}(\matr{A})\, d_{i'}^{\alpha}(\matr{A})}.
\end{equation}
\item Let \(|\lambda_2(\matr{T})|\) denote the second-largest eigenvalue (in modulus) of \(\matr{T}\) and let \(\|\mathbf{v}\|_{\mathbf{s}}=\sqrt{\sum_{i=1}^ns_i\mathbf{v}_i^2}\), where $\mathbf{v}_i$ is the $i$-th entry of $\mathbf{v}$ and $s_i$ is agent $i$'s influence defined in Equation \eqref{eq_influence}. Then, at any time \( t \),
\begin{equation}\label{eq:speed_general}
\max\limits_{\|\vec{b}_0 \|_{\mathbf{s}}=1} \| (\matr{T}^t - \matr{T}^{\infty}) \vec{b}_0 \|_{\mathbf{s}} = |\lambda_2(\matr{T})|^t.
\end{equation}
Therefore, the convergence speed is inversely related to \(|\lambda_2(\matr{T})|\).

\end{enumerate}

\end{proposition}

We note a few things here. It is a standard result in the literature that if a network is primitive, long-run beliefs converge to consensus,\footnote{See \cite{degroot1974reaching} and \cite{Meyer2000}; specifically, Proposition~6 in \cite{golubs2016}.} and that the convergence speed of beliefs is determined by the second-largest eigenvalue in modulus (SLEM) of the learning matrix.\footnote{See \cite{montenegro2006mathematical} and \cite{levin2017markov}; see also Proposition~8 in \cite{golubs2016}.} Our characterization of agents' influences and convergence speed in the degree-weighted deterministic setting (Proposition \ref{prop_expected}) provides the foundation for the asymptotic result below and for applying our framework to a wisdom-of-crowds result and the comparative statics in Section~\ref{sec:app}. 




We are now ready to introduce our key technical result. We show that the learning outcomes corresponding to the realized networks, $\matr{A}$ and $\matr{T}$, are, in a probabilistic sense, sufficiently close to those corresponding to the expected networks, $\matr{R}$ and $\matr{T}^*$.\footnote{In Online Appendix~O3, 
we show that our main results are robust to minor perturbations of the adjacency matrix \( \matr{A} \).}



\begin{theorem}\label{thm:asy}
Under Assumptions~\ref{ass:density}–\ref{ass:connection}, for any degree dependence $\alpha \in \mathbb{R}$ and any random network $\vec{A}(\vec{P},\vec{n})$, there exists a positive constant $\tilde{C}$, independent of $n$, such that for all $n>0$, the following results hold for the realized learning matrix $\matr{T}(\matr{A},\alpha)$ and the expected learning matrix $\matr{T}^* = \matr{T}(\matr{R},\alpha)$.
\begin{enumerate}
\item \textbf{For influences,}\footnote{Recall that $\tau_n = \min_i {d_{i}(\matr{R})}/{n}$. Assumption \ref{ass:density} implies that ${\sqrt{\log n}}/{(\tau_n\sqrt{n})}$ goes to $0$ as $n \rightarrow \infty$.}
$$
\p\Bigl\{
\|\vec{s}(\matr{T})-\vec{s}(\matr{T}^*)\|_2\ge \tilde{C}\frac{\sqrt{\log n}}{\tau_n\sqrt{n}}
\Bigr\}\le \frac{16}{n^2}\,.$$%

\item \textbf{For convergence speed,}
$$
\p\Bigl\{
\bigl|\lambda_2(\matr{T})-\lambda_2(\matr{T}^*)\bigr|\ge \tilde{C}\frac{\sqrt{\log n}}{\tau_n\sqrt{n}}
\Bigr\}\le \frac{16}{n^2}\,.
$$
\end{enumerate}
\end{theorem}
\label{proof_s1}
\paragraph{Sketch of Proof}\footnote{The formal proof is involved and is provided in  Appendix \ref{sec:proof_thm_asy}. We briefly outline the key ideas here and indicate how our approach diverges from the existing literature.} 
The critical step of the proof is to show that the symmetry of the adjacency matrix \( \matr{A} \) ensures that the learning matrices \( \matr{T} \) and \( \matr{T}^* \) are similar to symmetric matrices. This property is pivotal, as it enables the application of eigenvalue perturbation methods specifically designed for symmetric or symmetrically similar matrices.

Specifically, we express \(
    \matr{T} = \matr{D}_{1,\matr{A}}^{-1} \matr{A} \matr{D}_{2,\matr{A}},\) and
    \(\matr{T}^* = \matr{D}_{1,\matr{R}}^{-1} \matr{R} \matr{D}_{2,\matr{R}}\) 
where \( \matr{D}_{i,\matr{A}} \) and \( \matr{D}_{i,\matr{R}} \) are diagonal matrices that depend nonlinearly on degrees. Lemma~\ref{lemma:similar} in the appendix shows how the symmetry of the adjacency matrix \( \matr{A} \) implies that \( \matr{T} \) and \( \matr{T}^* \) are similar to symmetric matrices.

With symmetry established, we then bound the difference \( |\lambda_2(\matr{T})-\lambda_2(\matr{T}^*)| \) through Weyl's inequality, a standard perturbation result that requires symmetry (or symmetric similarity) of the involved matrices. This reduces our analysis to deriving suitable bounds on the matrix differences \( \|\matr{A} - \matr{R}\| \), \( \|\matr{D}_{1,\matr{A}} - \matr{D}_{1,\matr{R}}\| \), and \( \|\matr{D}_{2,\matr{A}} - \matr{D}_{2,\matr{R}}\| \).\footnote{{For a matrix $\matr{B}$, $\|\matr{B}\|=\sup_{\|\matr{v}\|=1}=\|\matr{B}\matr{v}\|$, where $\|\matr{v}\|$ is Euclidean norm of vector $\matr{v}$.}}

Bounding \( \|\matr{A} - \matr{R}\| \) is relatively straightforward and follows standard procedures from the literature (e.g., \citealp{GolubJackson2012,dasaratha2020distributions,mostagir2021centrality}). However, the nonlinearity introduced by degree weighting presents significant difficulties in bounding the diagonal matrices \( \matr{D}_{1,\matr{A}} \) and \( \matr{D}_{2,\matr{A}} \).

To handle this nonlinearity, we employ a leave-one-out strategy, inspired by techniques such as those developed by \cite{abbe2020entrywise}. This approach involves replacing \( d_j^{\alpha}(\matr{A}) \) with \( d_{j,-i}^{\alpha}(\matr{A}) \), which excludes the direct influence of the element \( A_{ij} \). Consequently, the modified degree \( d_{j,-i}^{\alpha}(\matr{A}) \) becomes independent of \( A_{ij} \), allowing us to approximate the diagonal entries as
\(
(\matr{D}_{1,\matr{A}})_{ii} \approx \sum_j A_{ij}\,d_{j,-i}^{\alpha}(\matr{A}).
\)

Given this construction, the sum \(\sum_{j\neq i} A_{ij}\,d_{j,-i}^{\alpha}(\matr{A})\) becomes a sum of independent random variables, conditional on \(\{d_{j,-i}(\matr{A}): j\neq i\}\). This step allows us to employ standard concentration inequalities such as Bernstein inequality \citep{vershynin2018}  to control the convergence rates of the diagonal entries of \( \matr{D}_{1,\matr{A}} \) and \( \matr{D}_{2,\matr{A}} \). This concludes the key steps of the proof.\footnote{{We briefly outline the intuition for each assumption in the context of this theorem. The proof relies on bounding the deviation between the eigenvectors of the random matrix $\matr{T}$ and the expected matrix $\matr{T}^*$. This bound generally takes the form $\text{Error} \lesssim {\text{Noise}}/{\text{Gap}}$. Assumptions~\ref{ass:density} and~\ref{ass:cdensities} control the noise (numerator): they ensure sufficiently dense connections so that the random matrix concentrates tightly around its mean via the Law of Large Numbers. Assumption~\ref{ass:connection} guarantees the gap (denominator): it ensures the spectral gap of the expected matrix remains strictly bounded away from zero, thereby providing the structural stability required for the eigenvectors to be robust to noise.}} \hfill \qed
\label{proof_s2}

\begin{remark}
\label{rm:alpha}
Proposition~\ref{prop_expected} and Theorem~\ref{thm:asy} remain valid when \( \alpha \) depends on the individual \( i \), i.e.,
\[
T_{ij} = \frac{A_{ij}d_j^{\alpha_j}}{\sum\limits_{j'} A_{ij'}d_{j'}^{\alpha_{j'}}},\quad
T^*_{ij} = \frac{R_{ij}d_j^{\alpha_j}(\matr{R})}{\sum\limits_{j'} R_{ij'}d_{j'}^{\alpha_{j'}}(\matr{R})},\quad
s_{i} = \frac{\sum\limits_j R_{ij}\, d_{j}^{\alpha_j}(\matr{R}) d_i^{\alpha_i}(\matr{R})}{\sum\limits_{i',j}R_{i'j} d_j^{\alpha_j}(\matr{R}) d_{i'}^{\alpha_{i'}}(\matr{R})},\quad \forall i,j,
\]
where \( \max_{i,j}|\alpha_i - \alpha_j| \le c \) for some positive constant \( c \).\footnote{See Appendix \ref{proof:rm_alpha} for a proof.}
\end{remark}


Theorem~\ref{thm:asy} lays the foundation for further discussion of economic insights in the degree‐weighted learning framework. In particular, to understand key learning outcomes, we are able to focus on the expected matrices rather than all possible realized networks. Together with Proposition~\ref{prop_expected} in the previous section, we know that influences (and later, the wisdom) are captured by Equation~(\ref{eq_influence}), and that the convergence speed is inversely related to $|\lambda_2|$ of the expected learning matrix $\matr{T}^*$, as defined in Equation~(\ref{eq:speed_general}). In the next section, we provide a series of examples to highlight the impacts of degree‐weighted learning, as captured by $\alpha$, on societal wisdom, convergence speed, and a potential trade-off between the two.

\section{Applications}
\label{sec:app}
In this section, we apply the asymptotic results established above to three questions. First, we examine how the degree-dependent learning rule affects societal wisdom and find that its impact is largely monotonic. Next, we study its effect on the speed of learning, which displays a non-monotonic pattern. Lastly, we discuss a potential trade-off between societal wisdom and learning speed. With Theorem~\ref{thm:asy}, we can focus on the comparative statics of expected networks rather than of specific realized networks.

\paragraph{A Two-Type Model} For analytical tractability, we consider a class of stochastic block models with two levels of expected degree. Specifically, suppose there are $m$ groups with sizes $n_1 \neq n_2 = \cdots = n_m$, where within-group linking probabilities are given by $p_{kk} = p$ and across-group probabilities by $p_{kl} = q$ for $l \neq k$, with $p \neq q$.
 Agents therefore fall into two types: group~1 has expected degree $d_1^* = n_1 p + (m-1) n_2 q$, and each group $k > 1$ has expected degree $d_2^* = n_1 q + n_2 p + (m-2) n_2 q$. We refer to the group(s) with larger expected degree as the high-degree group(s), and those with smaller expected degree as the low-degree group(s).\footnote{Group~1 is the high-degree group whenever $d_1^* - d_2^* = (n_1 - n_2)(p - q) > 0$.}

{This structure, though simple, accommodates three cases distinguished by the numbers of high-degree ($m_h$) and low-degree ($m_l$) groups: (i) $m_h = 1$ and $m_l = 1$; (ii) $m_h > 1$ and $m_l = 1$, capturing situations in which high-degree agents represent different campaigns or hold conflicting views; and (iii) $m_h = 1$ and $m_l > 1$, corresponding to environments with multiple low-degree groups that may reflect cultural, ethnic, or ideological divisions.
}


\subsection{Influence and Wisdom of the Crowd}
\label{sec:wisdom}
This section discusses influence and wisdom.\footnote{We thank the editor and the anonymous referees for suggesting this discussion.} {Our goal is to combine insights from \cite{GolubJackson2010} with our main technical result, Theorem~\ref{thm:asy}, to characterize the effect of degree dependence on societal wisdom.}

{Specifically, we consider a sequence of realized networks generated by a stochastic block model, and analyze the corresponding sequence of learning matrices. The wisdom criterion, established in \cite{GolubJackson2010}, states that society is wise if and only if $\lim\limits_{n \rightarrow \infty} \max\limits_{i \leq n} s_i(\matr{T}(n)) = 0$, where $s_i(\matr{T}(n))$ is agent $i$'s influence under the learning matrix $\matr{T}(n)$. Our Theorem~\ref{thm:asy} shows that, under Assumptions~\ref{ass:density}–\ref{ass:connection}, the influence vector of the realized network converges to that of the expected network as $n \rightarrow \infty$. Therefore, if $\lim\limits_{n \rightarrow \infty} \max\limits_{i \leq n} s_i(\matr{T}^*(n)) = 0$ for the expected matrix $\matr{T}^*(n)$, then, with high probability, the realized sequence of networks will also be wise. Conversely, if $\liminf\limits_{n} \max\limits_{i \leq n} s_i(\matr{T}^*(n)) > 0$, wisdom fails with high probability. This logic ensures that our comparative static results with respect to $\alpha$ hold with high probability in large networks.}


\paragraph{Influences With Special Cases of $\alpha$} We begin by considering special cases of the degree-dependence parameter~$\alpha$ to illustrate its effect on agents' influence. 
Recall that in Equation~(\ref{eq_influence}) of Proposition~\ref{prop_expected}, we obtain the following characterization of the impact of $\alpha$ on influence for any deterministic matrix $\matr{A}$ that satisfies the imposed conditions:  
$s_{i} = {(\sum_j A_{ij}\, d_{j}^{\alpha} d_i^{\alpha})}/{(\sum_{i',j} A_{i'j}\, d_j^{\alpha} d_{i'}^{\alpha})}.
$

A notable special case is $\alpha = 0$ (equal weighting). As shown in the literature (e.g., \citealp{demarzo2003persuasion} and \citealp{GolubJackson2010}), we have
$s_i = {d_i}/{\sum_{j} d_j}$,
so that the influence of agent $i$ is proportional to their degree.

Our framework introduces additional interesting cases. For instance, the following example presents the case of $\alpha = -1$:

\begin{example}
When $\alpha = -1$, an agent $i$'s influence is 
$s_i = {\sum_j ({A_{ij}}/{d_i d_j}})/{\sum_{i',j}({A_{i'j}}/{d_{i'} d_j}}),$
which is inversely proportional to the (weighted) harmonic average of the neighbors' degrees. When the neighbors' degrees are independent, all agents share the same influence in expectation.
\end{example}

This specific degree dependence, $\alpha = -1$, approximately offsets the impact of degree heterogeneity, though it does not yield exactly equal influences.

Next, we highlight that the degree dependence, $\alpha$, via shaping the agents' influences, can have more fundamental impacts on learning outcomes, such as whether the social learning process results in an unbiased aggregation of individual signals (i.e., the wisdom of the crowd).

\paragraph{Wisdom} 
We follow the definition of wisdom introduced in \cite{GolubJackson2010}. More specifically, suppose that agents' initial beliefs are i.i.d.\ signals drawn from a distribution with mean $\theta$, which represents the true state of the world, and variance $\sigma^2>0$. To study networks in the large, we index the learning matrix by $n$, representing the size of the society. The sequence of learning matrices is then given by a sequence of $n$-by-$n$ interaction matrices. We denote the belief of agent $i$ in network $n$ at time $t$ by $b_i^{(t)}(n)$ and its limiting belief by $b_i^{(\infty)}(n)$. We say that a sequence $\bigl(\matr{T}(n)\bigr)_{n=1}^\infty$ of learning matrices is \emph{wise} if the limiting beliefs converge jointly in probability to the true state $\theta$.

\begin{definition}[Wisdom]
The sequence $\bigl(\matr{T}(n)\bigr)_{n=1}^\infty$ is \emph{wise} if 
$\plim\limits_{n\rightarrow\infty} \; \max_{i \leq n} \bigl| b_i^{(\infty)}(n) - \theta \bigr| = 0.$
\end{definition}

We also say that a society is wise if the sequence $\bigl(\matr{T}(n)\bigr)_{n=1}^\infty$ is wise. Let $s(n)$ denote the influence vector corresponding to $\matr{T}(n)$, as defined in Proposition~\ref{prop_expected}. By applying Lemma 1 and Corollary 1 in \cite{GolubJackson2010} (with our formulation for $s(n)$ replacing their $s(n)_1$), we obtain an explicit necessary and sufficient condition for a society to be wise when agents use a degree‐dependent updating rule.

\begin{lemma}[\citealp{GolubJackson2010}]
\label{cor:influence}
A society is wise if and only if 
$\lim\limits_{n\rightarrow\infty}\max_{i} s_i(n) = 0.$
\end{lemma}


 

 With the above preparations, we are now ready to state our results on the effect of degree dependence, $\alpha$, on influence and societal wisdom. Recall that for any expected network $\matr{R}$ generated by a two-type model introduced at the beginning of this section, there are two degree levels, denoted by $d_l < d_h$. Let the corresponding counts of agents be $n_h$ and $n_l = n - n_h$, and the influences of each type be $s_l$ and $s_h$, respectively. We have the following:

\begin{proposition}
[Monotonic Impact of \( \alpha \) on Influences] In the two-type model, as $\alpha$ increases, $s_h(\alpha)$ increases and $s_l(\alpha)$ decreases. 
In addition, at the extremes, we have $s_h(-\infty) := \lim_{\alpha \rightarrow  -\infty} s_h(\alpha)= 0$, $s_l(-\infty) = 1/n_l$, and $s_h(+\infty) = 1/n_h$, $s_l(+\infty) = 0$.
\label{prop-influence} 
\end{proposition}

That is, assigning more weight to high-degree neighbors amplifies the influence of high-degree agents and diminishes the influence of low-degree agents. At the extremes, when $\alpha \rightarrow -\infty$, high-degree neighbors are effectively ignored so that each low-degree agent carries equal influence, while the influence of any high-degree agent vanishes. Conversely, when $\alpha \rightarrow +\infty$, weight is placed almost exclusively on high-degree neighbors, so that each high-degree agent carries equal influence, while the influence of any low-degree agent vanishes. \\

With Lemma~\ref{cor:influence} and Proposition~\ref{prop-influence}, we now state the comparative statics on wisdom.

\begin{proposition}
[Monotonic Impact of \( \alpha \) on Wisdom]
\label{cor-wisdom} 
In the two-type model, consider a sequence of societies indexed by the number of agents $n$, such that the number of high-degree agents is bounded, i.e., $n_h(n) < M$ for some $M < +\infty$. Then there exists a threshold $\bar{\alpha}$ such that the learning outcome is wise if and only if $\alpha \leq \bar{\alpha}$. At the extremes, the learning outcome is wise when $\alpha = -\infty$ and unwise when $\alpha = +\infty$.
\end{proposition}



\paragraph{Key Takeaway on Societal Wisdom} In networks with a bounded number of high-degree agents and infinitely many low-degree agents, degree dependence \( \alpha \) has \emph{monotonic} effects on influence and thus on societal wisdom: higher values of \( \alpha \) enable high-degree agents to retain non-vanishing influence, rendering the society unwise.

Wisdom requires that every individual’s influence vanish in the limit, whereas non-wisdom arises when a change in \( \alpha \) grants a certain group non-vanishing influence. In our setting, as \( \alpha \) increases, high-degree agents—who constitute a bounded group such as experts or media—may preserve influence, thereby rendering the society unwise. This reflects a structural asymmetry: while the group of low-degree agents grows unboundedly and their influence dissipates, the group of high-degree agents remains bounded and thus can sustain non-trivial influence in large societies (see Proposition \ref{prop-influence}).

\subsection{Convergence Speed} 
\label{sec:speed}
{In this section, we illustrate the effect of the degree-dependence \( \alpha \) on convergence speed. The analysis builds on Proposition~\ref{prop_expected}, which establishes that convergence speed is inversely related to \( |\lambda_2| \) of the expected network, together with Theorem~\ref{thm:asy}, which ensures that the comparative statics of $\alpha$ on convergence speed hold with high probability in large networks.} 

We show that, in contrast to the monotonic effect of $\alpha$ on wisdom, its impact on convergence speed is non-monotonic. Whether a more popularity-favoring rule (higher \( \alpha \)) accelerates or slows convergence depends on the relative number of groups, which essentially determines the diversity of views potentially held within each type. 

\begin{figure}[h]
\includegraphics[width=100mm]{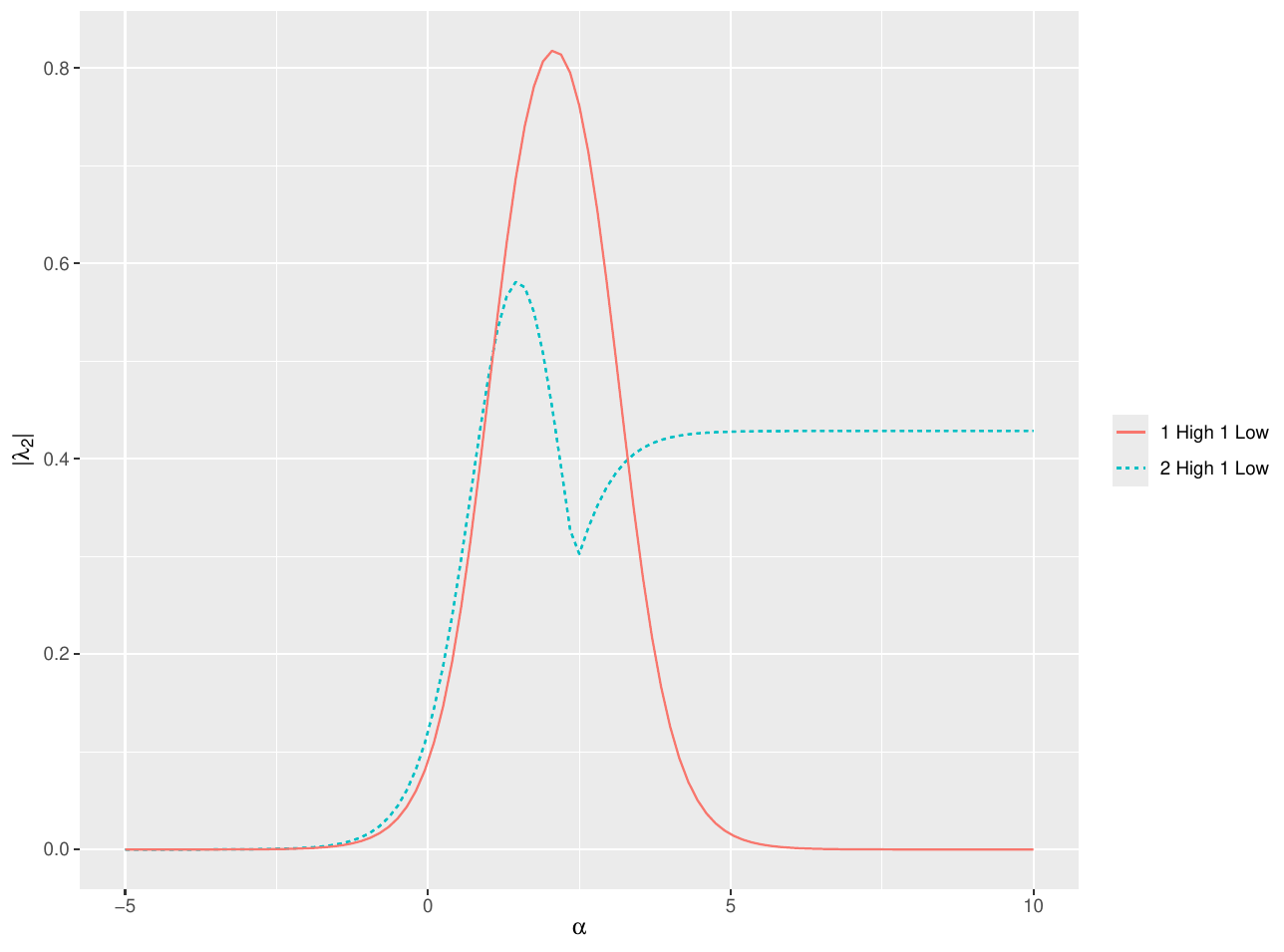}
\caption{Convergence Speed}
\label{fig_sym_speed_note}

\begin{figurenotes}
Convergence speed, as measured by \( |\lambda_2(\mathbf{T}^*)| \), is plotted against degree dependence \( \alpha \). The solid curve corresponds to the case with one high-degree group and one low-degree group (Example~\ref{ex-1-e-1-g}), while the dashed curve corresponds to the case with two high-degree groups and one low-degree group (Example~\ref{ex-2-e-1-g}).\end{figurenotes}
\end{figure}

We begin with the following example.


\begin{example}[One High-Degree Group and One Low-Degree Group]
\label{ex-1-e-1-g}
Consider a network consisting of one low-degree group (of size $n_1 = 1000$) and one high-degree group (of size $n_2 = 10$), where the within-group linking probabilities are \( p_{11} = p_{22} = 0.01 \), and the across-group linking probabilities are \( p_{12} = p_{21} = 0.10 \).
 There exists a threshold \( \alpha^* \) at which convergence is slowest. As \( \alpha \) moves away from this threshold, $|\lambda_2|$ of the expected learning matrix decreases, implying faster convergence.
\end{example}

As the preceding example illustrates (see the solid curve in Figure~\ref{fig_sym_speed_note}), variations in \( \alpha \) lead to non-monotonic changes in convergence speed. Notably, convergence speed is not simply governed by a uniform bias toward either group; rather, it is determined by the \emph{relative} influence between the high- and low-degree groups. In the setting with one high-degree group and one low-degree group, at most two dominant opinions emerge. When \( \alpha \) is sufficiently large, the high-degree group dominates, resulting in rapid convergence; similarly, as \( \alpha \rightarrow -\infty \), the low-degree group prevails, again yielding fast convergence. The slowest convergence arises for an intermediate value \( \alpha^* \), where the influence of the two groups is most balanced.


The following proposition generalizes Example~\ref{ex-1-e-1-g} and provides an explicit characterization of the convergence speed for networks with two groups—one high-degree and one low-degree.
\begin{proposition}[Convergence Speed with One High-Degree and One Low-Degree Group]
\label{prop-1-e-1-g}
Consider a two-group model with $n_1 > n_2$ and linking probabilities $p_{11} = p_{22} = p$ and $p_{12} = p_{21} = q$. The expected degrees of groups 1 and 2 are $d^*_1 = n_1 p + n_2 q$ and $d^*_2 = n_1 q + n_2 p$, respectively. The convergence speed can then be approximated by the inverse of the following expression:
\begin{equation}
    \label{lambda2m2}
    |\lambda_2(\matr{T}^*)|=\left|
    \frac{n_1p\,d_1^{*\alpha}}{n_1p\,d_1^{*\alpha}+n_2q\,d_2^{*\alpha}}
    -\frac{n_1q\,d_1^{*\alpha}}{n_1q\,d_1^{*\alpha}+n_2p\,d_2^{*\alpha}}
    \right|.
\end{equation}
This expression has the following properties: 
\begin{itemize}
    \item $\lim\limits_{\alpha\rightarrow -\infty}|\lambda_2(\mathbf{T}^*)|= 0$,  $\lim\limits_{\alpha\rightarrow +\infty}|\lambda_2(\mathbf{T}^*)| = 0$.
    \item $|\lambda_2(\matr{T}^*)|$ attains its maximum, $\overline{|\lambda_2|} = {|p-q|}/{(p+q)}$, at $\alpha^* = \log_{d^*_2/d^*_1}(n_1/n_2)$.
    \item $|\lambda_2(\matr{T}^*)|$ increases with $\alpha$ for $\alpha < \alpha^*$ and decreases with $\alpha$ for $\alpha > \alpha^*$.
\end{itemize}
\end{proposition}


The proposition reflects a general pattern of non-monotonicity in the impact of $\alpha$, as illustrated in Example~\ref{ex-1-e-1-g}, and shows that convergence is immediate when $\alpha \rightarrow +\infty$ or $\alpha \rightarrow -\infty$, as a single group of homogeneous agents (in expectation) dominates the learning dynamics.

The explicit characterization of $|\lambda_2|$ also provides new insight beyond the example: as shown in the proposition, the slowest rate of convergence is captured by $\overline{|\lambda_2|} = {|p-q|}/{(p+q)}$. Recall that $p$ denotes the linking probability within groups and $q$ the linking probability across groups. When $p > q$, this recovers the classical result that increased homophily slows convergence \citep{GolubJackson2012}. More interestingly, when $p < q$, the slowest rate of learning decreases in $|p - q|$, indicating that greater group differentiation also slows convergence. Finally, if $p = q$, the groups are indistinguishable and the population is homogeneous (in expectation), yielding immediate convergence for any value of $\alpha$.\\

To further illustrate that it is the diversity of views that drives the non-monotonic impact of $\alpha$, we consider the following example, which splits the single high-degree group in Example~\ref{ex-1-e-1-g} into two groups.
\begin{example}[Two High-Degree Groups and One Low-Degree Group]
\label{ex-2-e-1-g}
Let $n_1 = 1000$ and $n_2 = n_3 = 5$, with linking probabilities \( p_{kk} = 0.01 \) within each group, \( p_{1k} = p_{k1} = 0.10 \) for \( k = 2,3 \) between the high-degree and low-degree groups, and \( p_{23} = p_{32} = 0.001 \) between the two high-degree groups. 
The learning speed for this configuration is plotted as the dashed curve in Figure~\ref{fig_sym_speed_note}. For small values of $\alpha$, it closely resembles Example~\ref{ex-1-e-1-g} (the solid curve). For large positive $\alpha$, however, the pattern diverges markedly: convergence slows down as \( \alpha \rightarrow +\infty \).
\end{example}

Examples~\ref{ex-1-e-1-g} and \ref{ex-2-e-1-g} feature almost identical degree distributions,\footnote{The parameters are chosen so that in both examples, there are 10 high-degree agents with expected degree \( d_h \approx 100 \) and 1000 low-degree agents with expected degree \( d_l \approx 11 \).}
yet exhibit markedly different patterns of convergence speed.
Unlike the single high-degree group case (Example~\ref{ex-1-e-1-g}), with multiple high-degree groups (Example~\ref{ex-2-e-1-g}), diverse views may arise among them; consequently, as \( \alpha \rightarrow +\infty \), immediate convergence does not occur, since conflicting high-degree groups require time to reconcile their differing opinions.
The following proposition generalizes Example~\ref{ex-2-e-1-g} to cases with multiple high-degree or multiple low-degree groups. 


\begin{proposition}[Convergence Speed with Three or More Groups]
\label{prop-1vm}
The learning speed, captured by $1/|\lambda_2(\mathbf{T}^*)|$, is generally non-monotonic in the learning heuristic $\alpha$.\footnote{For explicit characterizations of \( |\lambda_2| \) and the threshold(s) of \( \alpha \), see Lemma~\ref{speed:specific} and Proposition~\ref{impact_of_alpha} 
in Appendix~\ref{sec:proofs_convergence_speed}.} 
In particular:
\begin{itemize}
    \item $m_h > 1,\, m_l = 1$: There exist two thresholds, $\alpha_1 < \alpha_2$, such that $|\lambda_2(\mathbf{T}^*)|$ increases with $\alpha$ when $\alpha < \alpha_1$ or $\alpha > \alpha_2$, and decreases with $\alpha$ when $\alpha \in (\alpha_1, \alpha_2)$. In addition, 
    $\lim\limits_{\alpha\rightarrow -\infty}|\lambda_2(\mathbf{T}^*)| = 0$, $\lim\limits_{\alpha\rightarrow +\infty}|\lambda_2(\mathbf{T}^*)| > 0$.
    
    \item $m_h = 1,\, m_l > 1$: There exist two thresholds, $\alpha_1 < \alpha_2$, such that $|\lambda_2(\mathbf{T}^*)|$ decreases with $\alpha$ when $\alpha < \alpha_1$ or $\alpha > \alpha_2$, and increases with $\alpha$ when $\alpha \in (\alpha_1, \alpha_2)$. In addition,
    $\lim\limits_{\alpha\rightarrow -\infty}|\lambda_2(\mathbf{T}^*)| > 0$, $\lim\limits_{\alpha\rightarrow +\infty}|\lambda_2(\mathbf{T}^*)| = 0$.
\end{itemize}
\end{proposition}


\paragraph{Key Takeaway on Convergence Speed}
Our analysis reveals a non-monotonic impact of degree dependence on convergence speed. Greater popularity-favoring does not necessarily lead to faster consensus. Whether it accelerates or slows convergence depends on the diversity of views within high- and low-degree groups. 
When low-degree groups outnumber high-degree ones and different groups hold distinct views, greater popularity-favoring can accelerate convergence; the reverse occurs when high-degree
groups outnumber low-degree groups.

\subsection{Trade-off Between Wisdom and Convergence Speed }
\label{sec:tradeoff}
Our analysis of the impact of degree-weighted learning on societal wisdom and convergence speed reveals a potential trade-off between the two. For instance, while a society that resists popularity bias may achieve greater wisdom ({see Proposition~\ref{cor-wisdom}}), consensus may be reached more slowly when many low-degree groups exist, each potentially holding different views ({see the second case in  Proposition~\ref{prop-1vm}}). Conversely, a more popularity-favoring learning rule may accelerate belief convergence at the cost of wisdom.

Table~\ref{tab:wisdom_speed}, introduced in the Introduction, summarizes the results on societal wisdom ({wise or unwise, Proposition~\ref{cor-wisdom}}) and convergence speed ({fast or slow, Propositions~\ref{prop-1-e-1-g} and~\ref{prop-1vm}}). We discuss potential policy implications of the wisdom–speed trade-off in the concluding remarks.

\section{Discussion on the Symmetry Assumption}  
\label{sec:discussion}
Before concluding, we discuss the symmetry assumption imposed on the adjacency matrix.  Although both directed and undirected networks are natural in a learning setting, we focus on undirected networks because they are required for using the second-largest eigenvalue in modulus (SLEM) to bound convergence speed and, more importantly, to derive our asymptotic result in this context. 

\paragraph{Why Symmetry is Needed} 
The derivation of the inverse relationship between SLEM and convergence speed requires employing spectral decomposition, which generally necessitates that the learning matrix be symmetric (or similar to a symmetric matrix). To prove the asymptotic result, we apply Weyl’s inequalities, which require the matrices to be Hermitian. Moreover, for real matrices, symmetry is both a necessary and sufficient condition for being Hermitian. These technical requirements are the key reasons why we focus on undirected networks.

\paragraph{What Happens When the Network is Asymmetric} 
For non-symmetric matrices, there is no general eigenvalue inequality comparable to Weyl’s inequalities that can effectively bound the difference between $\matr{T}$ and $\matr{T}^*$. The convergence of the high-dimensional random matrix $\matr{T}^{\infty}$ (which is related to a non-symmetric matrix $\matr{A}$) is complex, as we do not know whether $\matr{T}$ is diagonalizable. 
We conjecture that our results still hold for diagonalizable $\matr{T}$.

\paragraph{Simulations}
Consider a stochastic block model with two groups, \(n_1=500\) and \(n_2 = 300\), and the linking matrix \(\textbf{P}'\) given by  
\begin{equation}\label{eq_asy_prob}
    \matr{P}' =
    \begin{pmatrix}
        0.3 & 0.2 + a \\
        0.2 - a & 0.3
    \end{pmatrix},
\end{equation}
where \( a \in [0, 0.2] \) captures the level of asymmetry. When \( a = 0 \), we recover the benchmark of symmetric matrices. 

Figure \ref{fig_asy_speed} plots convergence speeds as a function of the degree dependence \( \alpha \). For small asymmetry \( a \), the convergence speed curves are close to the symmetric benchmark (\( a = 0 \)). When \( a \) is large (e.g., \( a = 0.2 \)), the behavior differs significantly: \( |\lambda_2| \) approaches 1 for high degree dependence \( \alpha \), indicating a failure of convergence.

Figure \ref{fig_asy_convergence} depicts the convergence of \( \matr{T}^t \) and complements the above observation. In particular, we plot the logarithm of the Frobenius norm, \( 2\log\|\matr{T}^{t}-(\matr{Q}^{-1})_{,1} \matr{Q}_{1,}\|_F \), for \( t=1,\dots,20 \) at \( \alpha = 1 \).%
\footnote{If \( \matr{T} \) is diagonalizable, then \( \matr{T} = \matr{Q}^{-1} \Lambda \matr{Q} \), leading to \( \matr{T}^{\infty} = (\matr{Q}^{-1})_{,1} \matr{Q}_{1,} \), where \( \matr{Q}_{1,} \) is the first row of \( \matr{Q} \) and \( (\matr{Q}^{-1})_{,1} \) is the first column of \( \matr{Q}^{-1} \).} 
With mild asymmetry (say \( a \leq 0.05 \)), 
\( \|\matr{T}^{t}-(\matr{Q}^{-1})_{,1} \matr{Q}_{1,}\|_F < e^{-10/2},  t\in[8,20] \), indicating convergence of \( \matr{T}^t \) to the {limit matrix}. With large asymmetry (\( a = 0.18 \) or \( 0.2 \)), 
\( \|\matr{T}^{t}-(\matr{Q}^{-1})_{,1} \matr{Q}_{1,}\|_F > e^{0} = 1,  t\in[1,20] \), meaning \( \matr{T}^t \) does not appear to converge.

\begin{figure}
     \centering
     \begin{subfigure}[b]{0.45\textwidth}
         \centering
\includegraphics[width=\textwidth]{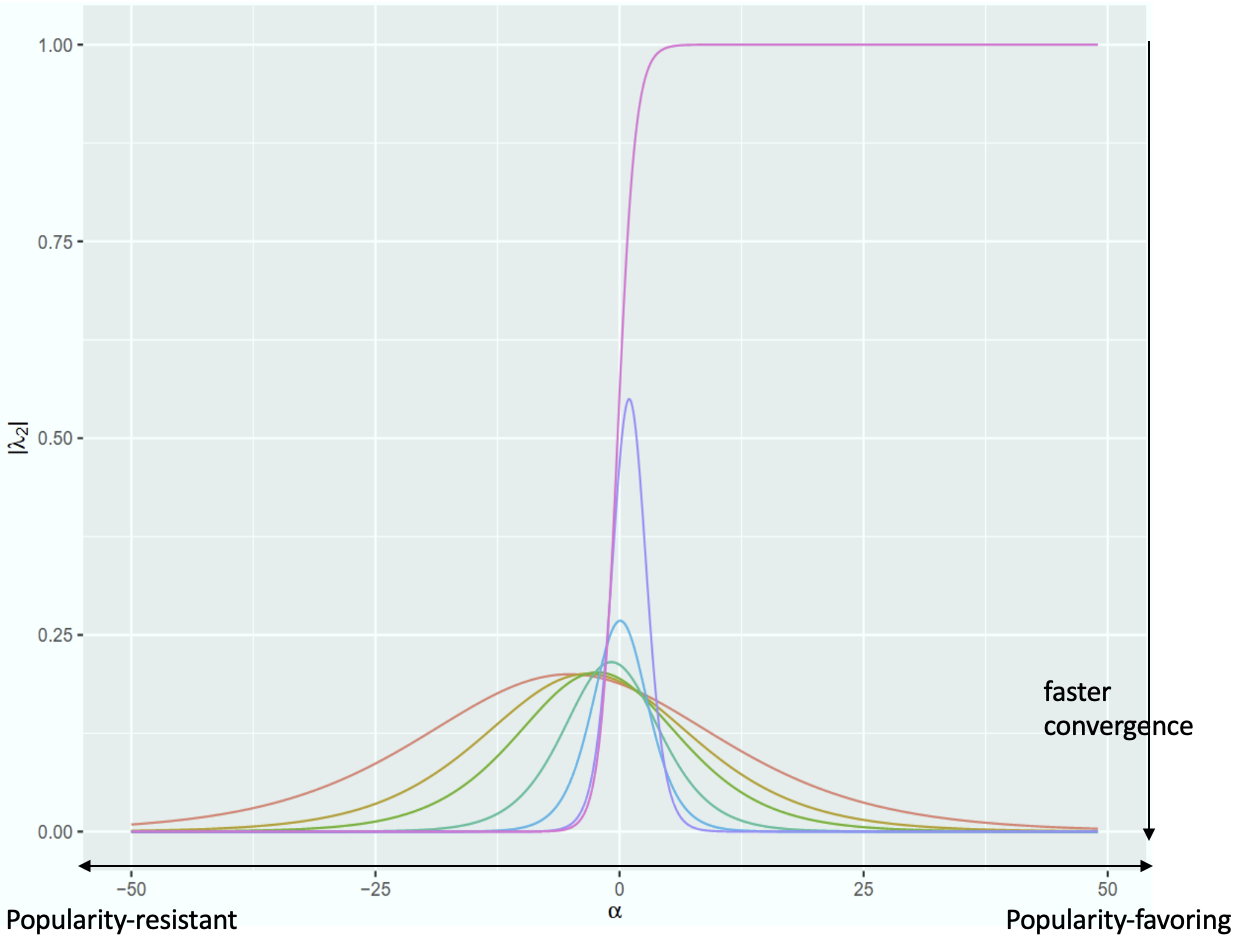}
         \caption{Learning speed, $|\lambda_2|$.}
         \label{fig_asy_speed}
     \end{subfigure}
     \hfill
     \begin{subfigure}[b]{0.45\textwidth}
         \centering
\includegraphics[width=\textwidth]{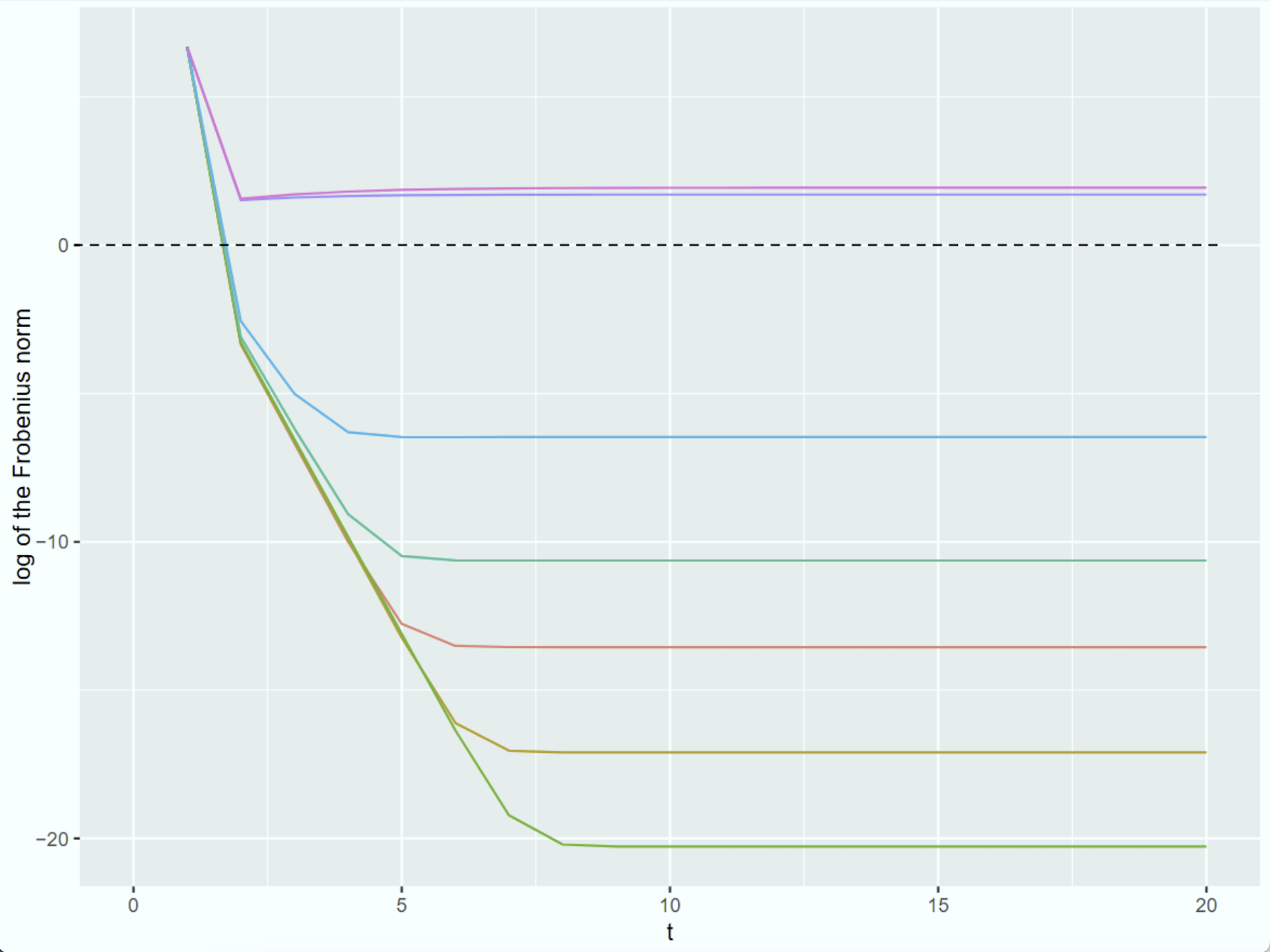}
         \caption{Convergence of $\matr{T}^t$.}
         \label{fig_asy_convergence}
     \end{subfigure}
     \begin{subfigure}[b]{0.08\textwidth}     \includegraphics[width=\textwidth]{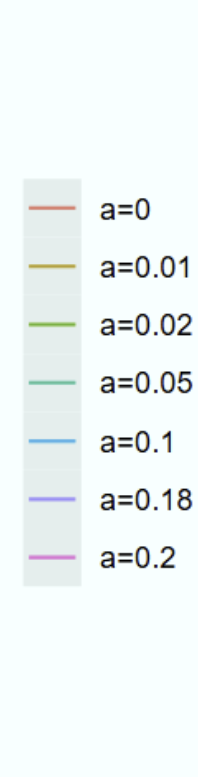}
     \end{subfigure}
     \caption{Learning Speed and Convergence of $\matr{T}^t$ with Varying Asymmetry, $a$.}\label{fig_asy}
     \begin{figurenotes}
There are two groups with group sizes $n_1 = 500$ and $n_2 = 300$; linking probabilities are $p_{11} = p_{22} = 0.3$, $p_{12} = 0.2 + a$, $p_{21} = 0.2 - a$, and $a$ captures the level of asymmetry. Panel (b) plots the logarithm of the Frobenius norm, \(2\log\|\matr{T}^{t}-(\matr{Q}^{-1})_{,1} \matr{Q}_{1,}\|_F\).    \end{figurenotes}
\end{figure}


These simulations convey two key messages. 
\begin{itemize}
    \item A slight departure from symmetry leads to similar learning outcomes, suggesting that the symmetric case is not a knife-edge condition and that our findings exhibit at least local robustness (Figure \ref{fig_asy} with relatively small \( a \)).
    \item A large departure from symmetry, however, introduces substantial differences and challenges. For instance, in symmetric matrices, all eigenvalues are real, and \( |\lambda_2| \) is strictly less than one, whereas in asymmetric matrices, eigenvalues may be complex, and \( |\lambda_2| \) can approach one, indicating extremely slow convergence (Figure \ref{fig_asy} with \( a = 0.18 \) or \( 0.2 \)).
\end{itemize}

\section{Concluding Remarks}
\label{sec:con}
Our views are shaped by neighbors in social networks. How we weigh their opinions affects the duration of disagreement and learning effectiveness in a society.  To address these questions, we propose a degree-weighted DeGroot learning model, in which an agent's weighting of neighbors' opinions is determined by the neighbors' popularity (degree). We examine how degree dependence affects consensus beliefs, societal wisdom, and the speed of belief convergence.

Our main technical contribution is establishing an asymptotic theory of learning outcomes (consensus beliefs and convergence speeds) for degree-weighted learning matrices in random networks. The central economic insight from this framework is that degree dependence shapes both convergence speed and societal wisdom, revealing a trade-off between the two: while a more popularity-favoring rule (higher degree dependence) may accelerate convergence when views among high-degree groups are relatively unified, it may hinder society's ability to learn the true state. 


The economic implications of the trade-off between wisdom and convergence speed offer intriguing avenues for further exploration. For instance, future work could investigate scenarios in which policymakers or platform designers prioritize rapid consensus formation at the expense of societal wisdom. It would also be valuable to characterize optimal threshold values of the degree-dependence in specific informational or social contexts, thereby bridging theoretical predictions with practical applications.


\newpage


\clearpage
\begin{appendix}



\renewcommand{\thesubsection}{\arabic{subsection}}

\numberwithin{proposition}{section}
\numberwithin{lemma}{section}
\numberwithin{theorem}{section}
\numberwithin{equation}{section}


\section{Appendix \Alph{section}. Primitivity of $\matr{R}(n)$, $\matr{T}^*(n)$ and $\matr{A}(n)$}

\label{sec:Tlimit}
\begin{proposition}
\label{remark_connect}
Under Assumptions \ref{ass:density}--\ref{ass:connection}, there exists an integer $n_{0} \in \mathbb{N}$ s.t. $\forall n \ge n_{0}$:
\begin{enumerate}
    \item The expected adjacency matrix $\matr{R}(n)$ and learning matrix $\matr{T}^*(n)$ are primitive.
    \item The realized adjacency matrix $\matr{A}(n)$ is primitive with probability at least $1 - 16n^{-2}$.
\end{enumerate}
\end{proposition}

\paragraph{Proof of Proposition \ref{remark_connect}} First, we establish the primitivity of the expected matrices $\matr{T}^*$ and $\matr{R}$.
Assumption \ref{ass:connection} states that $\limsup_{n\rightarrow\infty}|\lambda_{2}(\matr{T}^{*})|<1$.
By the definition of limit superior, there exists a constant $c < 1$ and an integer $n_1$ such that for all $n \ge n_1$, we have $|\lambda_{2}(\matr{T}^{*}(n))| \le c < 1$.
Since $\matr{T}^*$ is a row-stochastic matrix, its largest eigenvalue is 1. The condition $|\lambda_{2}(\matr{T}^{*})| < 1$ implies that the eigenvalue 1 is simple and strictly dominant in modulus.
For a nonnegative matrix, this condition is sufficient to guarantee primitivity (irreducibility and aperiodicity).
Thus, $\matr{T}^*(n)$ is primitive for all $n \ge n_1$.
Since $\matr{T}^*$ and $\matr{R}$ share the same zero-pattern (connectivity structure) by construction, the primitivity of $\matr{T}^*$ implies that $\matr{R}$ is also primitive for all $n \ge n_1$.

Next, we consider the realized adjacency matrix $\matr{A}$ and learning matrix $\matr{T}$.
By the eigenvalue concentration results established in Part 2 of Theorem \ref{thm:asy},\footnote{We remark that the proof of Part 2 of Theorem \ref{thm:asy} does not rely on the realized matrices $\matr{A}$ and $\matr{T}$ being primitive with a high probability. Therefore, using that result in the present proof does not involve circular reasoning.} 
the second-largest eigenvalue of the realized learning matrix, $|\lambda_{2}(\matr{T})|$, converges in probability to $|\lambda_{2}(\matr{T}^{*})|$.
$\forall n \ge n_1$ (defined in the previous paragraph), since $|\lambda_{2}(\matr{T}^{*})| \leq c < 1$, the spectral gap $1 - |\lambda_{2}(\matr{T}^{*})| \geq 1-c > 0$ is bounded away from zero.
Consequently, there exists an integer $n_{0} \ge n_1$ such that for all $n \ge n_{0}$, the event
\[ E_{n}=\{|\lambda_{2}(\matr{T})|<1\} \]
occurs with probability at least $1-16n^{-2}$.
Conditioning on the event $E_{n}$, since $\matr{T}$ is row-stochastic, the strict inequality $|\lambda_{2}(\matr{T})|<1$ implies that $\matr{T}$ is primitive.
Finally, since each realized $\matr{A}$ and its corresponding $\matr{T}$ share the exact same support (i.e., $\matr{T}_{ij}>0$ if and only if $\matr{A}_{ij}>0$), $\matr{T}$ is primitive if and only if $\matr{A}$ is primitive.
It follows that for all $n \ge n_0$:
\[ \mathbb{P}(\matr{A} \text{ is primitive}) = \mathbb{P}(\matr{T} \text{ is primitive}) \ge \mathbb{P}(E_{n}) \ge 1-16n^{-2}. \]
This completes the proof. \hfill\qed

\section{Appendix \Alph{section}. Proofs of Results in Section \ref{sec:result}}

\subsection{Preliminaries: Generalized Degree-Dependence Function}
\label{subsec:generalized_degree_dependence}


In this section, we prove that our results hold under a more general setting, where  $d^{\alpha}$ is replaced by $ \phi(\alpha, d)$. Here, $ \phi(\alpha, d)$ satisfies Properties~\ref{property_4}-\ref{property_6}.

\begin{property}
\label{property_4}
The function \(\phi(\alpha,d) \in \mathcal{C}^2(\mathbb{R} \times [0,\infty))\) is nonnegative and satisfies \(\phi(0,d) \equiv 1\) for all \( d \in [0,\infty) \). Additionally, \(\phi\) is monotonically increasing in \(\alpha\) for \(\alpha \in \mathbb{R}\), monotonically increasing in \(d\) for \(\alpha \in (0,\infty)\), and monotonically decreasing in \(d\) for \(\alpha \in (-\infty,0)\).
\end{property}

\begin{property}
\label{property_5}
For any \(d_1 > d_2\), the ratio \(\phi(\alpha,d_2)/\phi(\alpha,d_1)\) is strictly decreasing in \(\alpha\):
\begin{equation}
\frac{d}{d\alpha} \left(\frac{\phi(\alpha,d_2)}{\phi(\alpha,d_1)}\right) < 0\,. \label{Property2}
\end{equation}
Furthermore, we have the limits:
\begin{equation}
\lim\limits_{\alpha\rightarrow\infty} \frac{\phi(\alpha,d_2)}{\phi(\alpha,d_1)} = \lim\limits_{\alpha\rightarrow-\infty} \frac{\phi(\alpha,d_1)}{\phi(\alpha,d_2)} = 0.
\end{equation}
\end{property}

\begin{property}
\label{property_6}
The function \(\phi\) satisfies the following conditions:
\begin{equation}
\limsup_{d\rightarrow \infty} \frac{\partial \phi(\alpha,d)/\partial d}{\phi(\alpha,d)/d} < \infty\,,\label{Property3}
\end{equation}
\begin{equation}
\limsup_{d\rightarrow \infty} \frac{\,\partial^2 \phi(\alpha,d)/\partial d^2}{(\partial \phi(\alpha,d)/\partial d)/d} < \infty\,.\label{Property3b}
\end{equation}
\end{property}

Properties \ref{property_4} and \ref{property_5} establish fundamental requirements for \(\phi\) to ensure the validity of our discussion. Property \ref{property_6} consists of two technical conditions necessary for the concentration result derived later for random networks. Intuitively, Property \ref{property_6} ensures that a small change in the degree \(d\) does not lead to an excessive increase in \(\phi\).

\subsection{Proof Preliminaries}

Throughout the appendix, various equalities and inequalities involve constants. We use the capital letter \(C\) to denote such constants, which may vary across expressions but remain positive and independent of \(n\).


\subsubsection{$\matr{T}$ Similar to a Symmetric Matrix} \label{subsec:T_symmetric_similarity}

As a preliminary step, we first establish that \(\matr{T}\) is similar to a symmetric matrix.
\begin{lemma}
\label{lemma:similar}
The learning matrix \(\matr{T}\), as defined in Equation (\ref{NewT}), is similar to a symmetric matrix.
\end{lemma}
\paragraph{Proof of Lemma \ref{lemma:similar}}
The matrix \(\matr{T}\) is given by
\begin{equation}
\matr{T} = \matr{D}_{1,\matr{A}}^{-1} \matr{A} \matr{D}_{2,\matr{A}}\,, \label{MatrixT_appendix}
\end{equation}
where \(\matr{D}_{1,\matr{A}}\) and \(\matr{D}_{2,\matr{A}}\) are diagonal matrices with diagonal entries:
\[
(\matr{D}_{1,\matr{A}})_{ii} = \sum_j A_{ij}\,\phi(\alpha,d_j(\matr{A}))\,,
\ \ \ 
(\matr{D}_{2,\matr{A}})_{ii} = \phi(\alpha,d_i(\matr{A}))\,.
\]
Define the diagonal matrix \(\matr{S}\) as
\begin{equation}
\matr{S} = \left(\matr{D}_{1,\matr{A}} \matr{D}_{2,\matr{A}}\right)^{-1/2}.
\end{equation}
Since the diagonal matrices \(\matr{D}_{1,\matr{A}}\) and \(\matr{D}_{2,\matr{A}}\) commute, we can rewrite \(\matr{T}\) in terms of \(\matr{S}\) as
\begin{equation}
\matr{S}^{-1} \matr{T} \matr{S} = \matr{D}_{2,\matr{A}}^{1/2} \matr{D}_{1,\matr{A}}^{-1/2} \matr{A} \matr{D}_{2,\matr{A}}^{1/2} \matr{D}_{1,\matr{A}}^{-1/2}.
\end{equation}
Since \(\matr{A}\) is symmetric, the right-hand side is also symmetric, proving that \(\matr{T}\) is similar to a symmetric matrix.
\hfill\qed


A direct corollary of the previous result is that \(\matr{T}^*\) can be expressed as  
\begin{equation}
\matr{T}^* = \matr{D}^{-1}_{1,\matr{R}} \matr{R} \matr{D}_{2,\matr{R}}\,,\label{MatrixT*_appendix}
\end{equation}
where \(\matr{R}=\E\matr{A}\), and \(\matr{D}_{1,\matr{R}}\) and \(\matr{D}_{2,\matr{R}}\) are diagonal matrices defined as  
\[
(\matr{D}_{1,\matr{R}})_{ii} = \sum_j R_{ij}\,\phi(\alpha,d_j(\matr{R}))\,,
\ \ \ 
(\matr{D}_{2,\matr{R}})_{ii} = \phi(\alpha,d_i(\matr{R}))\,.
\]

\subsection{Proof of Proposition \ref{prop_expected}}
\label{proof:prop_expected}
\subsubsection{Proof of the First Part of Proposition \ref{prop_expected}}

The following lemma, commonly known as the Perron-Frobenius Theorem, establishes that long-run beliefs converge to a consensus.  
 
\begin{lemma}\label{lemma:consensus}
Let \(\matr{T}\) be a primitive matrix with nonnegative entries. Then the spectral radius \(\rho(\matr{T}) = 1\) is a simple eigenvalue. Moreover, the limit \(\matr{T}^{\infty}\) exists and is given by
{
\[
\matr{T}^{\infty} = \lambda_1 \matr{U}_1 = \vec{v} \vec{w}^{\top}\,,
\]}
where \(\vec{v}\) and \(\vec{w}\) are the left and right unit eigenvectors of \(\matr{T}\) corresponding to \(\lambda_1\).
\end{lemma}

The proof can be found in \cite{Meyer2000}.


\begin{lemma}\label{LemmaA.3}
Let \(\matr{M}_1\) and \(\matr{M}_2\) be similar \(n \times n\) matrices such that  
\[
\matr{M}_2 = \matr{P} \matr{M}_1 \matr{P}^{-1},
\]
where \(\matr{P}\) is an invertible \(n \times n\) matrix. If \(\vec{v}\) is an eigenvector of \(\matr{M}_1\) corresponding to the eigenvalue \(\lambda\), then \(\matr{P}\vec{v}\) is an eigenvector of \(\matr{M}_2\) corresponding to \(\lambda\).
\end{lemma}  
\paragraph{Proof of Lemma \ref{LemmaA.3}}
Since \(\vec{v}\) is an eigenvector of \(\matr{M}_1\) with eigenvalue \(\lambda\), we have  
\(
\matr{M}_1 \vec{v} = \lambda \vec{v}.
\)
Multiplying both sides by \(\matr{P}\) gives  
\(
\matr{P} \matr{M}_1 \vec{v} = \lambda \matr{P} \vec{v}.
\)
Using the similarity relation \(\matr{M}_2 = \matr{P} \matr{M}_1 \matr{P}^{-1}\), we obtain  
\(
\matr{M}_2 (\matr{P} \vec{v}) = \matr{P} \matr{M}_1 \matr{P}^{-1} (\matr{P} \vec{v}) = \matr{P} \matr{M}_1 \vec{v} = \lambda \matr{P} \vec{v}.
\)

Thus, \(\matr{P} \vec{v}\) is an eigenvector of \(\matr{M}_2\) corresponding to \(\lambda\), as desired.
\hfill\qed

We seek to evaluate the limit:
\begin{equation}
\matr{T}^{\infty} = \lim_{t\rightarrow\infty} \matr{T}^{t} = \lim_{t\rightarrow\infty} \matr{D}_{1,\matr{A}}^{-1/2} \matr{D}_{2,\matr{A}}^{-1/2} 
\left(\matr{D}_{2,\matr{A}}^{1/2} \matr{D}_{1,\matr{A}}^{-1/2} \matr{A} \matr{D}_{2,\matr{A}}^{1/2} \matr{D}_{1,\matr{A}}^{-1/2}\right)^{t} 
\matr{D}_{1,\matr{A}}^{1/2} \matr{D}_{2,\matr{A}}^{1/2}. \label{LimitProp2}
\end{equation}
By Lemma \ref{lemma:consensus}, we know that the limit converges:
\begin{equation}
\lim_{t\rightarrow\infty} \left(\matr{D}_{2,\matr{A}}^{1/2} \matr{D}_{1,\matr{A}}^{-1/2} \matr{A} \matr{D}_{2,\matr{A}}^{1/2} \matr{D}_{1,\matr{A}}^{-1/2}\right)^{t} 
= \lambda_1 \vec{v}_1 \vec{v}_1^{\top}, \label{Symmetric_Matrix}
\end{equation}
where \(\lambda_1 = 1\) is the largest eigenvalue in magnitude, and \(\vec{v}_1\) is the corresponding unit eigenvector of the matrix \(\matr{D}_{2,\matr{A}}^{1/2} \matr{D}_{1,\matr{A}}^{-1/2} \matr{A} \matr{D}_{2,\matr{A}}^{1/2} \matr{D}_{1,\matr{A}}^{-1/2}\). 
Since the matrix in (\ref{Symmetric_Matrix}) is symmetric, its left and right eigenvectors are identical. 

To determine \(\vec{v}_1\), we apply Lemma \ref{LemmaA.3}. It is straightforward to verify that \(\matr{D}_{1,\matr{A}}^{-1} \matr{A} \matr{D}_{2,\matr{A}}\) has the eigenvector \(\vec{e} = (1,\dots,1)^{\top}\) corresponding to eigenvalue \(1\). Consequently, the matrix \(\matr{D}_{2,\matr{A}}^{1/2} \matr{D}_{1,\matr{A}}^{-1/2} \matr{A} \matr{D}_{2,\matr{A}}^{1/2} \matr{D}_{1,\matr{A}}^{-1/2}\) has the eigenvector 
\[
\vec{w}_1 = \matr{D}_{2,\matr{A}}^{1/2} \matr{D}_{1,\matr{A}}^{1/2} \vec{e}.
\]
Normalizing \(\vec{w}_1\) to obtain a unit vector, we divide by its Euclidean norm:
\[
\|\vec{w}_1\|_2 = \left\|\matr{D}_{2,\matr{A}}^{1/2} \matr{D}_{1,\matr{A}}^{1/2} \vec{e} \right\|_2 
= \left( \sum_{i,j} A_{ij} \,\phi(\alpha,d_j(\matr{A}))\phi(\alpha,d_i(\matr{A})) \right)^{1/2}.
\]
Thus, the unit eigenvector is given by \(\vec{v}_1 = \vec{w}_1 / \|\vec{w}_1\|_2\), and substituting into (\ref{LimitProp2}) yields

\begin{align*}
\matr{T}^{\infty} &= \matr{D}_{1,\matr{A}}^{-1/2} \matr{D}_{2,\matr{A}}^{-1/2} 
\left( \|\vec{w}_1\|_2^{-1} \matr{D}_{2,\matr{A}}^{1/2} \matr{D}_{1,\matr{A}}^{1/2} \vec{e} \right)
\left( \|\vec{w}_1\|_2^{-1} \matr{D}_{2,\matr{A}}^{1/2} \matr{D}_{1,\matr{A}}^{1/2} \vec{e} \right)^{\top} 
\matr{D}_{1,\matr{A}}^{1/2} \matr{D}_{2,\matr{A}}^{1/2}\\
&= \|\vec{w}_1\|_2^{-2} \matr{E}_n \matr{D}_{1,\matr{A}} \matr{D}_{2,\matr{A}},
\end{align*}
where \(\matr{E}_n\) is an \(n \times n\) matrix with all entries equal to \(1\). This gives the final expression:
\begin{equation}\label{rmk1ref}
\matr{T}^{\infty}_{ij} = \frac{\sum_{i'} A_{i'j} \,\phi(\alpha,d_j(\matr{A}))\phi(\alpha,d_{i'}(\matr{A}))}{\sum_{i',j'} A_{i'j'} \,\phi(\alpha,d_{j'}(\matr{A}))\phi(\alpha,d_{i'}(\matr{A}))}, \quad \forall i,j = 1,\dots,n.
\end{equation}

\hfill\qed

\subsubsection{Proof of the Second Part of Proposition \ref{prop_expected}}
This part generalizes the result in \cite{GolubJackson2012}. For the proof, we establish an upper and a lower bound and show that they coincide. The main structure of the proof follows \cite{GolubJackson2012}. 

A key step in this adaptation is applying the spectral theorem to decompose \(\matr{T}\). Lemma \ref{lemma:similar} establishes that \(\matr{T}\) is similar to a real symmetric matrix, which allows us to apply the spectral theorem for diagonalizable matrices \cite{Meyer2000}. The spectral decomposition of \(\matr{T}\) is given by
\begin{equation}
\matr{T} = \sum_{i=1}^{n} \lambda_i \matr{U}_i\,,\label{Spectral_Decomp}
\end{equation}
where \(\lambda_i\) are the eigenvalues of \(\matr{T}\), ordered in decreasing magnitude, and \(\matr{U}_i\)  is the orthogonal projection associated with \(\lambda_i\). Specifically, we have \(\matr{U}_i=\matr{D}_{2,\matr{A}}^{-1/2} \matr{D}_{1,\matr{A}}^{-1/2}\vec{v}_i\vec{v}_i^T\matr{D}_{2,\matr{A}}^{1/2} \matr{D}_{1,\matr{A}}^{1/2}\), where \(\vec{v}_i\) is the unit eigenvector of the symmetric matrix \(\matr{D}_{2,\matr{A}}^{1/2} \matr{D}_{1,\matr{A}}^{-1/2} \matr{A} \matr{D}_{2,\matr{A}}^{1/2} \matr{D}_{1,\matr{A}}^{-1/2}\) associated with \(\lambda_i \).

\paragraph{Upper Bound} 
We aim to establish an upper bound for the distance \(\left\Vert\left(\matr{T}^t-\matr{T}^{\infty}\right)\vec{b}\right\Vert_{\matr{s}}\) by applying the spectral decomposition of \(\matr{T}\) from (\ref{Spectral_Decomp}). Note that
\begin{equation}
\matr{T}^t-\matr{T}^{\infty}=\sum_{i=2}^n \lambda_i^t \matr{U}_i\,.\label{T^t-T^infty}
\end{equation}
The summation starts from \(i=2\) due to Lemma \ref{lemma:consensus}, which states that the largest eigenvalue of \(\matr{T}\) is \(1\), while all other eigenvalues have magnitudes strictly less than \(1\). The term corresponding to \(i=1\) cancels with \(\matr{T}^{\infty}\) since
\(
\matr{U}_1 = (\vec{v}\vec{w}^{\top})^t = \vec{v}(\vec{w}^{\top}\vec{v})^{t-1}\vec{w}^{\top} = \vec{v}\vec{w}^{\top} = \matr{T}^{\infty}, \quad \text{for } t \in \mathbb{N}.
\)
Applying (\ref{T^t-T^infty}), we obtain
\begin{align}
\left\Vert\left(\matr{T}^t-\matr{T}^{\infty}\right)\vec{b}\right\Vert_{\matr{s}}^2
& = \left\lVert\sum_{i=2}^n \lambda_i^t \matr{U}_i\vec{b}\right\rVert^2_{\matr{s}} \label{spectral}\\
& = \sum_{i=2}^n|\lambda_i|^{2t}\left\lVert\matr{v}_i\matr{v}_i^T\matr{D}_{2,\matr{A}}^{1/2} \matr{D}_{1,\matr{A}}^{1/2} \vec{b}\right\rVert^2_2/D, \quad\quad (\matr{v}_i\matr{v}_j = 0 \text{ for } i\neq j)\nonumber\\
& \leq |\lambda_2|^{2t}\sum_{i=2}^n\left\lVert\matr{v}_i\matr{v}_i^T\matr{D}_{2,\matr{A}}^{1/2} \matr{D}_{1,\matr{A}}^{1/2} \vec{b}\right\rVert^2_2/D\nonumber\\
& = |\lambda_2|^{2t}\left\lVert\sum_{i=2}^n\matr{v}_i\matr{v}_i^T\matr{D}_{2,\matr{A}}^{1/2} \matr{D}_{1,\matr{A}}^{1/2} \vec{b}\right\rVert^2_2/D\,.\quad\quad (\matr{v}_i\matr{v}_j = 0 \text{ for } i\neq j)\nonumber
\end{align}
where $D=\sum\limits_{i',j} A_{i'j}\, d_j^{\alpha}(\matr{A})\, d_{i'}^{\alpha}(\matr{A})$. Noticing that $\sum_{i=2}^n\matr{v}_i\matr{v}_i^T\le \matr{I}$, 
we obtain
\[
\left\Vert\left(\matr{T}^t-\matr{T}^{\infty}\right)\vec{b}\right\Vert_{\matr{s}}^2 \leq |\lambda_2|^{2t} \|\matr{D}_{2,\matr{A}}^{1/2} \matr{D}_{1,\matr{A}}^{1/2}\vec{b}\|_2^2/D = |\lambda_2|^{2t}| \|\vec{b}\|_{\matr{s}}^2=|\lambda_2|^{2t}.
\]
Taking the square root on both sides yields the desired upper bound.
\hfill\qed

\paragraph{Lower Bound} 
To derive a lower bound, we consider a specific choice of \(\vec{b}\)---the eigenvector of \(\matr{T}\) corresponding to \(\lambda_2\) in magnitude, normalized such that \(\|\vec{b}\|_{\matr{s}}^2=1\). Then
\[
\left(\matr{T}^t-\matr{T}^{\infty}\right)\vec{b} = \lambda_2^{\,t} \vec{b}.
\]
This follows from the fact that \(\matr{U}_i \vec{b} = 0\) for all \(i \neq 2\). Consequently,
\begin{align*}
\left\Vert\left(\matr{T}^t-\matr{T}^{\infty}\right)\vec{b}\right\Vert_{\matr{s}}^2
& = \left\lVert\lambda_2^{\,t} \vec{b}\right\rVert^2_{\matr{s}}
 = |\lambda_2|^{2t}.
\end{align*}
Taking the square root on both sides confirms that the lower bound matches the upper bound.

This completes the proof of Proposition \ref{prop_expected}. 
\hfill\qed




\subsection{Proof of Theorem \ref{thm:asy}}
\label{sec:proof_thm_asy}
\subsubsection{Proof of the First Part of Theorem \ref{thm:asy}}\label{b.4.1}
Similar to Equation \eqref{LimitProp2}, we have
$$(\matr{T}^*)^{\infty} = \lim_{t\rightarrow\infty}(\matr{T}^*)^{t}=\lim_{t\rightarrow\infty}\matr{D}_{1,\matr{R}}^{-1/2}\matr{D}_{2,\matr{R}}^{-1/2}\left(\matr{D}_{2,\matr{R}}^{1/2}\matr{R}_{1,\matr{R}}^{-1/2}\matr{R}\matr{D}_{2,\matr{R}}^{1/2}\matr{D}_{1,\matr{R}}^{-1/2}\right)^{t}\matr{D}_{1,\matr{R}}^{1/2}\matr{D}_{2,\matr{R}}^{1/2}\,. $$
Combining this with \eqref{LimitProp2} and \eqref{Symmetric_Matrix}, we have
\begin{align}\label{xha1}
  \|\vec{s}(\matr{T})-\vec{s}(\matr{T}^*)\|\le \|\matr{T}^{\infty}-(\matr{T}^*)^{\infty}\|=& \|\matr{D}_{1,\matr{A}}^{-1/2}\matr{D}_{2,\matr{A}}^{-1/2}\vec{v}_1\vec{v}_1^{\top}\matr{D}_{1,\matr{A}}^{1/2}\matr{D}_{2,\matr{A}}^{1/2}-\matr{D}_{1,\matr{R}}^{-1/2}\matr{D}_{2,\matr{R}}^{-1/2}\vec{v}^*_1\vec{v}_1^{*\top}\matr{D}_{1,\matr{R}}^{1/2}\matr{D}_{2,\matr{R}}^{1/2}\|\nonumber\\
  \le & \|\vec{v}_1\vec{v}_1^{\top}-\vec{v}^*_1\vec{v}_1^{*\top}\|+\|\matr{D}_{1,\matr{A}}^{-1/2}\matr{D}_{2,\matr{A}}^{-1/2}\matr{D}_{1,\matr{R}}^{1/2}\matr{D}_{2,\matr{R}}^{1/2}-\matr{I}\| \nonumber\\
  & +\|\matr{D}_{1,\matr{A}}^{1/2}\matr{D}_{2,\matr{A}}^{1/2}\matr{D}_{1,\matr{R}}^{-1/2}\matr{D}_{2,\matr{R}}^{-1/2}-\matr{I}\|,
\end{align}
where $\vec{v}^*_1$ is the first eigenvector of $\matr{D}_{2,\matr{R}}^{1/2}\matr{D}_{1,\matr{R}}^{-1/2}\matr{R}\matr{D}_{2,\matr{R}}^{1/2}\matr{D}_{1,\matr{R}}^{-1/2}$.

In order to control the right-hand side of the inequality, we introduce the following propositions and lemmas, with their proofs presented in Online Appendix~O1

\begin{proposition}
\label{propC.1}
$\|\matr{A}-\matr{R}\|\leq 3\sqrt{n\log n}\,,$
with probability at least $1-{2}/{n^2}$.
\end{proposition}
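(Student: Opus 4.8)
The plan is to treat $\matr{A}-\matr{R}$ as a symmetric random matrix with independent entries on and above the diagonal and to apply a matrix concentration (matrix Bernstein) inequality. Since the $A_{ij}$ for $i\le j$ are independent Bernoulli variables and $R_{ij}=\E A_{ij}$, each entry $A_{ij}-R_{ij}$ is centered with $|A_{ij}-R_{ij}|\le 1$. First I would write the matrix as a sum of independent symmetric pieces,
$$\matr{A}-\matr{R}=\sum_{i\le j}\matr{X}^{(ij)},\qquad \matr{X}^{(ij)}=(A_{ij}-R_{ij})\big(\vec{e}_i\vec{e}_j^{\top}+\vec{e}_j\vec{e}_i^{\top}\big)\ (i<j),\quad \matr{X}^{(ii)}=(A_{ii}-R_{ii})\vec{e}_i\vec{e}_i^{\top},$$
so that the $\matr{X}^{(ij)}$ are independent, symmetric, mean-zero, and uniformly bounded in spectral norm by $L:=1$.

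Next I would compute the variance proxy $v:=\big\lVert\sum_{i\le j}\E[(\matr{X}^{(ij)})^2]\big\rVert$. A direct computation using $(\vec{e}_i\vec{e}_j^{\top}+\vec{e}_j\vec{e}_i^{\top})^2=\vec{e}_i\vec{e}_i^{\top}+\vec{e}_j\vec{e}_j^{\top}$ shows that $\sum_{i\le j}\E[(\matr{X}^{(ij)})^2]$ is a diagonal matrix whose $i$-th entry equals $\sum_{j}\var(A_{ij})=\sum_j R_{ij}(1-R_{ij})$. Bounding each summand by $R_{ij}(1-R_{ij})\le \tfrac14$ gives $v\le n/4$. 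The matrix Bernstein inequality then yields, for every $t\ge 0$,
$$\p\!\left\{\lVert \matr{A}-\matr{R}\rVert\ge t\right\}\le 2n\exp\!\left(-\frac{t^2/2}{v+Lt/3}\right)\le 2n\exp\!\left(-\frac{t^2/2}{n/4+t/3}\right).$$

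Finally I would substitute $t=3\sqrt{n\log n}$. Then $t/3=\sqrt{n\log n}=o(n)$, so for all sufficiently large $n$ the denominator satisfies $n/4+t/3\le n/2$, whence the exponent is at least $(t^2/2)/(n/2)=t^2/n=9\log n$. This gives a tail of at most $2n\exp(-9\log n)=2n^{-8}\le 2/n^2$, which is the claimed bound. The only delicate point, and the main obstacle, is the bookkeeping of constants: one must verify that the dimensional prefactor $2n$ coming from matrix Bernstein is absorbed, which works precisely because the loose variance bound $v\le n/4$ forces the exponent to grow like $c\log n$ with $c$ comfortably exceeding $3$, so that $2n\cdot n^{-c}\le 2/n^{2}$. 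One should also note the implicit threshold on $n$ is harmless, since \ref{propC.1} is only invoked inside asymptotic ($n\to\infty$) statements. An alternative, fully self-contained route would replace matrix Bernstein by an $\varepsilon$-net argument on the unit sphere together with a scalar Bernstein bound for each fixed quadratic form $\vec{x}^{\top}(\matr{A}-\matr{R})\vec{x}$; that approach in fact produces the sharper rate $O(\sqrt n)$, but matrix Bernstein reproduces the stated $\sqrt{n\log n}$ scaling most directly.
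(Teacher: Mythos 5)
Your proposal is correct and matches the paper's proof essentially step for step: the same decomposition of $\matr{A}-\matr{R}$ into independent symmetric rank-one/rank-two pieces (the paper merely collects the diagonal terms into a separate matrix $\matr{Y}$), the same matrix Bernstein inequality with $K=1$ and variance proxy $\sigma^2\le n/4$, and the same choice $t=3\sqrt{n\log n}$. Your arithmetic in the exponent is in fact slightly sharper ($9\log n$ versus the paper's $3\log n$), and both arguments share the same harmless implicit threshold on $n$, so there is nothing to correct.
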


\begin{proposition}
\label{propC.2}
Under Assumptions \ref{ass:density}-\ref{ass:cdensities}, with probability at least \(1-{14}/{n^2}\),
\begin{equation}
\left\Vert\matr{D}_{\matr{A}}^{-1/2}-\matr{D}_{\matr{R}}^{-1/2}\right\Vert\leq C\sqrt{n\log n}\left(\min_i d_i(\matr{R})\right)^{-3/2}\,.
\end{equation}
\end{proposition}

\begin{lemma}\label{LemmaC.1}
Let \(K>1\) be a constant independent of \(n\). Then, for \(\alpha\geq 0\),
\begin{equation}
\phi(\alpha,\min_i d_i(\matr{R}))\leq \phi(\alpha,K\min_i d_i(\matr{R})) \leq C_1\phi(\alpha,\min_i d_i(\matr{R}))\,,\label{IneqC.1.1}
\end{equation}
\begin{equation}
C_2\frac{\partial \phi}{\partial d}(\alpha,\min_i d_i(\matr{R})) \leq \frac{\partial \phi}{\partial d}(\alpha,K\min_i d_i(\matr{R}))\leq C_3\frac{\partial \phi}{\partial d}(\alpha,\min_i d_i(\matr{R}))\,,\label{IneqC.1.2}
\end{equation}
where \(C_1, C_2, C_3\) are positive constants independent of \(n\). 

On the other hand, for \(\alpha<0\),
\begin{equation}
{C_4}\phi(\alpha,\min_i d_i(\matr{R}))\leq \phi(\alpha,K\min_i d_i(\matr{R})) \leq \phi(\alpha,\min_i d_i(\matr{R}))\,,\label{IneqC.1.3}
\end{equation}
\begin{equation}
{C_5}\frac{\partial \phi}{\partial d}(\alpha,\min_i d_i(\matr{R})) \leq \frac{\partial \phi}{\partial d}(\alpha,K\min_i d_i(\matr{R}))\leq {C_6}\frac{\partial \phi}{\partial d}(\alpha,\min_i d_i(\matr{R}))\,,\label{IneqC.1.4}
\end{equation}
{where \(C_4, C_5, C_6\) are positive constants independent of \(n\). 
}

\end{lemma}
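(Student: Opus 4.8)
Write $d_* := \min_i d_i(\matr{R})$. Assumption \ref{ass:density} forces $d_* = n\tau_n \to \infty$, so every estimate below lives in the large-degree regime where Property \ref{property:3} is active; throughout, constants may depend on $\alpha$, $\phi$, and $K$ but never on $n$. The whole lemma reduces to controlling the two ratios $\phi(\alpha,Kd_*)/\phi(\alpha,d_*)$ and $\partial_d\phi(\alpha,Kd_*)/\partial_d\phi(\alpha,d_*)$. My plan is to take logarithms and express each log-ratio as an integral of the corresponding elasticity over the interval $[d_*,Kd_*]$, then bound that elasticity by Property \ref{property:3}. Since $Kd_*/d_* = K$ is a fixed scale factor, each integral produces a bound of size $O(\log K)$, which is the source of the $n$-free constants.

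\textbf{The monotonicity halves.} One inequality in each pair is free. By Property \ref{property:1}, for $\alpha\geq 0$ the map $d\mapsto\phi(\alpha,d)$ is nondecreasing, so $\phi(\alpha,d_*)\leq\phi(\alpha,Kd_*)$, which is the left inequality of (\ref{IneqC.1.1}); for $\alpha<0$ it is nonincreasing, giving $\phi(\alpha,Kd_*)\leq\phi(\alpha,d_*)$, the right inequality of (\ref{IneqC.1.3}). No quantitative input is needed for these.

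\textbf{The quantitative halves via the fundamental theorem of calculus.} For the $\phi$-ratio, $\log\frac{\phi(\alpha,Kd_*)}{\phi(\alpha,d_*)}=\int_{d_*}^{Kd_*}\frac{\partial_s\phi(\alpha,s)}{\phi(\alpha,s)}\,ds$. Property \ref{property:3} (first display) keeps the elasticity $\bigl|\tfrac{s\,\partial_s\phi}{\phi}\bigr|\leq M$ for $s$ large, hence $\bigl|\tfrac{\partial_s\phi}{\phi}\bigr|\leq M/s$, and integrating yields $\bigl|\log\tfrac{\phi(\alpha,Kd_*)}{\phi(\alpha,d_*)}\bigr|\leq M\log K$, i.e. $K^{-M}\leq\frac{\phi(\alpha,Kd_*)}{\phi(\alpha,d_*)}\leq K^{M}$. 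Setting $C_1=K^{M}$ closes the remaining inequalities of (\ref{IneqC.1.1}) and (\ref{IneqC.1.3}). For the derivative the case $\alpha=0$ is trivial, since $\phi\equiv 1$ makes every term of (\ref{IneqC.1.2}) and (\ref{IneqC.1.4}) equal to $0$. When $\alpha\neq 0$, Property \ref{property:1} gives $\partial_d\phi$ a fixed sign, so $\log\frac{|\partial_d\phi(\alpha,Kd_*)|}{|\partial_d\phi(\alpha,d_*)|}=\int_{d_*}^{Kd_*}\frac{\partial_s^2\phi}{\partial_s\phi}\,ds$; the second display of Property \ref{property:3} bounds $\bigl|\tfrac{s\,\partial_s^2\phi}{\partial_s\phi}\bigr|\leq M'$, so $K^{-M'}\leq\frac{|\partial_d\phi(\alpha,Kd_*)|}{|\partial_d\phi(\alpha,d_*)|}\leq K^{M'}$. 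Taking $C_2=K^{-M'}$ and $C_3=K^{M'}$ and restoring the sign ($\partial_d\phi>0$ for $\alpha>0$, $<0$ for $\alpha<0$) reproduces (\ref{IneqC.1.2}) and (\ref{IneqC.1.4}) with the inequality directions flipping exactly as stated.

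\textbf{Main obstacle.} The delicate direction is never the monotone one but its opposite, in particular the lower bounds $C_2\,\partial_d\phi(\alpha,d_*)\leq\partial_d\phi(\alpha,Kd_*)$ and, for $\alpha<0$, $C_1\phi(\alpha,d_*)\leq\phi(\alpha,Kd_*)$. These force control of the elasticities from \emph{below}, not merely above: a pathological $\phi$ whose derivative decayed super-polynomially (say $\partial_d\phi\sim e^{-\sqrt{d}}$) would make $\partial_d\phi(\alpha,Kd_*)/\partial_d\phi(\alpha,d_*)\to 0$ as $d_*\to\infty$, destroying any $n$-free lower constant. The substantive role of Property \ref{property:3} is precisely to exclude such behaviour by keeping $|s\,\partial_s\phi/\phi|$ and $|s\,\partial_s^2\phi/\partial_s\phi|$ bounded in the large-$d$ regime, as they are identically $|\alpha|$ and $|\alpha-1|$ for the canonical $\phi=d^{\alpha}$. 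Securing this two-sided elasticity bound over the single fixed-ratio interval $[d_*,Kd_*]$ is the crux; once it is in hand, everything else is the fundamental theorem of calculus.
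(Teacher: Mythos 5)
Your proof is correct and takes essentially the same route as the paper's: the monotone directions come free from Property \ref{property:1}, and the nontrivial directions come from bounding the logarithm of the ratio via the elasticity bounds of Property \ref{property:3} (the paper applies the mean value theorem to $\log\phi$, where you integrate the logarithmic derivative over $[d_*,Kd_*]$ --- an immaterial difference that just yields the cleaner constant $K^M$ in place of $e^{C}$). Like the paper, you read Property \ref{property:3} as a bound on the \emph{magnitude} of the elasticities (which is what the $\alpha<0$ cases genuinely need, since for decreasing $\phi$ the literal one-sided $\limsup$ condition is vacuous), so your treatment matches, and indeed makes more explicit, the paper's own use of that property.
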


\begin{lemma}\label{LemmaC.2}
For a fixed \(j\), with probability at least \(1-{2}/{n^4}\),
\begin{equation}
|d_j(\matr{A})-d_j(\matr{R})|\leq \sqrt{2n\log n}\,.\label{IneqC.2.1}
\end{equation}
Moreover, for some \(c_{j,1}\in \left[d_j(\matr{R})-\sqrt{2n\log n},d_j(\matr{R})+\sqrt{2n\log n}\right]\), \(j = 1,...,n\), with probability at least \(1-{2}/{n^4}\),
\begin{equation}
\left\vert \phi(\alpha,d_j(\matr{A}))-\phi(\alpha,d_j(\matr{R}))\right\vert \leq \sqrt{2n\log n}\,\left|\frac{\partial\phi}{\partial d}(\alpha,c_{j,1})\right|\,.\label{IneqC.2.2}
\end{equation}
\end{lemma}

We first control the second term on the right-hand side of \eqref{xha1}. By Lemmas \ref{LemmaC.1}--\ref{LemmaC.2} and Proposition \ref{propC.2}, with probability at least \(1-{15}/{n^2}\), we obtain, for some positive constant \(C\),
\begin{align}\label{xha2}
\|\matr{D}_{1,\matr{A}}^{-1/2}\matr{D}_{2,\matr{A}}^{-1/2}\matr{D}_{1,\matr{R}}^{1/2}\matr{D}_{2,\matr{R}}^{1/2}-\matr{I}\|
&\leq \|\matr{D}_{\matr{A}}^{-1/2}\matr{D}_{\matr{R}}^{1/2}-\matr{I}\|\|\matr{D}_{2,\matr{R}}\matr{D}_{2,\matr{A}}^{-1}\|+\|\matr{D}_{2,\matr{R}}\matr{D}_{2,\matr{A}}^{-1}-\matr{I}\| \\
&\leq C\sqrt{n\log n}(\min_i d_i(\matr{R}))^{-1},\nonumber
\end{align}
A similar inequality holds for the third term on the right-hand side of \eqref{xha1}.

For the first term on the right-hand side of \eqref{xha1}, applying the Davis-Kahan theorem \citep{davis,tengyao} and using Assumption~\ref{ass:connection} (which ensures that the spectral gap \(1 - \lambda_2(\matr{T}^*)\) is eventually bounded below by some positive constant $1-c>0$), we obtain
\begin{equation}
\|\vec{v}_1\vec{v}_1^{\top}-\vec{v}^*_1\vec{v}_1^{*\top}\|
\leq C\|\matr{D}_{2,\matr{A}}^{1/2}\matr{D}_{1,\matr{A}}^{-1/2}\matr{A}\matr{D}_{2,\matr{A}}^{1/2}\matr{D}_{1,\matr{A}}^{-1/2}
-\matr{D}_{2,\matr{R}}^{1/2}\matr{D}_{1,\matr{R}}^{-1/2}\matr{R}\matr{D}_{2,\matr{R}}^{1/2}\matr{D}_{1,\matr{R}}^{-1/2}\|,
\end{equation}
where $C$ is a positive constant inversely related to the spectral gap $1-c > 0$. Furthermore, using a similar argument as in the proof of \eqref{xha2}, and applying Propositions \ref{propC.1}--\ref{propC.2} and Lemmas \ref{LemmaC.1}--\ref{LemmaC.2} to control the differences $\|\matr{D}_{2,\matr{A}}^{1/2}-\matr{D}_{2,\matr{R}}^{1/2}\|$, $\|\matr{D}_{1,\matr{A}}^{-1/2}-\matr{D}_{1,\matr{R}}^{-1/2}\|$, and $\|\matr{A}-\matr{R}\|$, we find that with probability at least \(1-{15}/{n^2}\),
\begin{equation}
\|\matr{D}_{2,\matr{A}}^{1/2}\matr{D}_{1,\matr{A}}^{-1/2}\matr{A}\matr{D}_{2,\matr{A}}^{1/2}\matr{D}_{1,\matr{A}}^{-1/2}
-\matr{D}_{2,\matr{R}}^{1/2}\matr{D}_{1,\matr{R}}^{-1/2}\matr{R}\matr{D}_{2,\matr{R}}^{1/2}\matr{D}_{1,\matr{R}}^{-1/2}\|
\leq \tilde{C}\sqrt{n\log n}(\min_i d_i(\matr{R}))^{-1} = \tilde{C}\frac{\sqrt{\log n}}{\tau_n\sqrt{n}}
\end{equation}
for some $\tilde{C}>0$. This completes the proof of the first part of Theorem \ref{thm:asy}. \hfill\qed

\subsubsection{Proof of the Second  Part of Theorem \ref{thm:asy}}
\label{sec:proofthm2nd}
Recall that \(\matr{R}=\E\matr{A}\). From the definitions of the matrices in Equations (\ref{MatrixT_appendix}) and (\ref{MatrixT*_appendix}), and by Lemma \ref{LemmaA.3}, comparing the eigenvalues \(\lambda_2(\matr{T})\) and \(\lambda_2(\matr{T}^*)\) reduces to comparing the eigenvalues of 
\(\matr{D}_{2,\matr{A}}^{1/2}\matr{D}_{1,\matr{A}}^{-1/2}\matr{A}\matr{D}_{2,\matr{A}}^{1/2}\matr{D}_{1,\matr{A}}^{-1/2}
\)
and
\(
\matr{D}_{2,\matr{R}}^{1/2}\matr{D}_{1,\matr{R}}^{-1/2}\matr{R}\matr{D}_{2,\matr{R}}^{1/2}\matr{D}_{1,\matr{R}}^{-1/2},
\)
since these matrices are similar to \(\matr{T}\) and \(\matr{T}^*\), respectively.
For simplicity, define
$$
\matr{D}_{\matr{A}} = \matr{D}_{2,\matr{A}}^{-1}\matr{D}_{1,\matr{A}}
\quad \text{and} \quad
\matr{D}_{\matr{R}} = \matr{D}_{2,\matr{R}}^{-1}\matr{D}_{1,\matr{R}}.
$$
By Weyl's inequality, we obtain
\begin{equation}
|\lambda_2(\matr{T})-\lambda_2(\matr{T}^*)| \leq \left\lVert\matr{D}_{\matr{A}}^{-1/2}\matr{A}\matr{D}_{\matr{A}}^{-1/2}-\matr{D}_{\matr{R}}^{-1/2}\matr{R}\matr{D}_{\matr{R}}^{-1/2}\right\rVert.\label{Weyl's}
\end{equation}
Therefore, it suffices to bound the spectral norm, \(\|\cdot\|\), on the right-hand side.


Furthermore, under Assumptions \ref{ass:density} and \ref{ass:cdensities} we have 
$$
0 < \liminf_n \frac{\min_i d_i(\matr{R})}{\max_i d_i(\matr{R})} \leq \limsup_n \frac{\max_i d_i(\matr{R})}{\min_i d_i(\matr{R})} <\infty,
$$
to ensure that $d_j(\mathbf{R}) / d_i(\mathbf{R})$ is bounded by positive constants $K_1, K_2$ for any $i, j$. Combining with the properties of $\phi$ (as utilized in Lemma \ref{LemmaC.1}, Appendix \ref{b.4.1}), it follows that the ratio $\phi\left(\alpha, d_j(\mathbf{R})\right) / \phi\left(\alpha, d_i(\mathbf{R})\right)$ is bounded by positive constants $C_{\phi, \min }$ and $C_{\phi, \max }$ (which depend on $\alpha, K_1, K_2$, but not on $n$ or $d_i(\mathbf{R})$ itself). Thus, for each $i$:
$$
\left(\mathbf{D}_{\mathbf{R}}\right)_{i i}=\sum_j R_{i j} \frac{\phi\left(\alpha, d_j(\mathbf{R})\right)}{\phi\left(\alpha, d_i(\mathbf{R})\right)} \geq\left(\sum_j R_{i j}\right) C_{\phi, \min }=d_i(\mathbf{R}) \cdot C_{\phi, \min }.
$$
Therefore, we have, for some positive constant $C_{\phi, \min }$ that is  independent of $n$,
$$
\min _i\left(\mathbf{D}_{\mathbf{R}}\right)_{i i} \geq C_{\phi, \min } \cdot \min _i d_i(\mathbf{R}).
$$
Then, note that the spectral norm in (\ref{Weyl's}) can be decomposed as:
\begin{align}\label{0705.x2}
&\left\lVert\matr{D}_{\matr{A}}^{-1/2}\matr{A}\matr{D}_{\matr{A}}^{-1/2}-\matr{D}_{\matr{R}}^{-1/2}\matr{R}\matr{D}_{\matr{R}}^{-1/2}\right\rVert \\ \nonumber
\leq & \left\lVert\matr{D}_{\matr{A}}^{-1/2}\matr{A}\matr{D}_{\matr{A}}^{-1/2}-\matr{D}_{\matr{R}}^{-1/2}\matr{A}\matr{D}_{\matr{A}}^{-1/2}\right\rVert
+\left\lVert\matr{D}_{\matr{R}}^{-1/2}\matr{A}\matr{D}_{\matr{A}}^{-1/2}-\matr{D}_{\matr{R}}^{-1/2}\matr{A}\matr{D}_{\matr{R}}^{-1/2}\right\rVert \\ \nonumber
&+\left\lVert\matr{D}_{\matr{R}}^{-1/2}\matr{A}\matr{D}_{\matr{R}}^{-1/2}-\matr{D}_{\matr{R}}^{-1/2}\matr{R}\matr{D}_{\matr{R}}^{-1/2}\right\rVert \\ \nonumber
\leq & \left\Vert\matr{D}_{\matr{A}}^{-1/2}-\matr{D}_{\matr{R}}^{-1/2}\right\Vert \cdot \|\matr{A}\|\cdot \left\Vert\matr{D}_{\matr{A}}^{-1/2}\right\Vert
+\left\Vert\matr{D}_{\matr{A}}^{-1/2}-\matr{D}_{\matr{R}}^{-1/2}\right\Vert \cdot \|\matr{A}\|\cdot \left\Vert\matr{D}_{\matr{R}}^{-1/2}\right\Vert \\ \nonumber
&+\|\matr{A}-\matr{R}\|\cdot \left\Vert\matr{D}_{\matr{R}}^{-1}\right\Vert.\nonumber
\end{align}
By the triangle inequality,
\begin{equation}
\|\matr{A}\|\leq \|\matr{A}-\matr{R}\|+\|\matr{R}\|\,.
\end{equation}
The spectral norm is bounded above by the Frobenius norm:
\begin{equation}
\|\matr{R}\| \leq \|\matr{R}\|_F = \sqrt{\sum_{i,j}R^2_{ij}}\leq Cn \max_{i,j}p_{ij} \leq C\min_j d_j(\matr{R})\,,
\end{equation}
for all \(n>0\), where \(\max_{i,j}p_{ij}\) is the largest entry of the matrix \(\matr{P}\) in the stochastic block model.

Applying Proposition \ref{propC.1} along with Assumption \ref{ass:density}, we obtain
\begin{equation}
\|\matr{A}\|\leq C\min_j d_j(\matr{R})\,,
\end{equation}
with probability at least \(1-{2}/{n^2}\). Similarly, by Proposition \ref{propC.2},
\begin{equation}
\left\lVert\matr{D}_{\matr{A}}^{-1/2}\right\rVert \leq C \left(\min_j d_j(\matr{R})\right)^{-1/2}\,,
\end{equation}
with probability at least \(1-{14}/{n^2}\), for sufficiently large \(n\).

Conditioning on the event
\begin{equation}
\left\lbrace\|\matr{A}-\matr{R}\|\leq 3\sqrt{n\log n}\right\rbrace
\bigcap
\left\lbrace\left\Vert\matr{D}_{\matr{A}}^{-1/2}-\matr{D}_{\matr{R}}^{-1/2}\right\Vert\leq C\sqrt{n\log n}\left(\min_i d_i(\matr{R})\right)^{-3/2}\right\rbrace,
\end{equation}
which holds with probability at least \(1-{16}/{n^2}\), we derive
\begin{equation}
\left\lVert\matr{D}_{\matr{A}}^{-1/2}\matr{A}\matr{D}_{\matr{A}}^{-1/2}-\matr{D}_{\matr{R}}^{-1/2}\matr{R}\matr{D}_{\matr{R}}^{-1/2}\right\rVert
\leq C\frac{\sqrt{n\log n}}{\min_i d_i(\matr{R})} = C\frac{\sqrt{\log n}}{\tau_n\sqrt{n}}\,.\label{tailboundTheorem1}
\end{equation}
This completes the proof for symmetric \(\matr{A}\) and concludes our proof of Theorem \ref{thm:asy}.
\hfill\qed

\paragraph{Proof of Remark \ref{rm:alpha}}
\label{proof:rm_alpha}
The proof is similar to Proposition~\ref{prop_expected} and Theorem~\ref{thm:asy}. The main difference lies in controlling \( |d_{i}^{\alpha_{i}} - d_{i}^{\alpha_{i}}(\matr{R})| \) for varying \( \alpha_i \). Since \( \max_{i,j}|\alpha_i - \alpha_j| \le c \), the values of \( \alpha_i \) concentrate around \( \bar{\alpha} = \sum_{i=1}^n \alpha_i/n \), implying that \( |d_{i}^{\alpha_{i}} - d_{i}^{\alpha_{i}}(\matr{R})| \) is of the same order as \( |d_{i}^{\bar{\alpha}} - d_{i}^{\bar{\alpha}}(\matr{R})| \).
\hfill\qed

\section{Appendix \Alph{section}. Proofs of Results in Section \ref{sec:app}}

\subsection{Wisdom: Related Proofs}
\label{sec:proofs_wisdom}

\paragraph{Proof of Proposition \ref{prop-influence}} 
\label{proof:prop_influence}
Given any two-type model and the corresponding expected network $\matr{R}$, let $d_{\theta \theta'}$ be the expected number of type-$\theta'$ agents linked to each type-$\theta$ agent, $\theta, \theta' \in \{h, l\}$. 

Recall that there are $n_\theta$ type-$\theta$ agents. By Equation~(\ref{eq_influence}) in Proposition~\ref{prop_expected}, a type-$\theta$ agent's influence is 
\[
s_\theta = \frac{S_\theta}{n_h S_h + n_l S_l},
\]
where%
\vspace{-5mm}
\[
S_h = d_{hh} d_h^\alpha d_h^\alpha + d_{hl} d_h^\alpha d_l^\alpha, 
\quad \text{and} \quad
S_l = d_{ll} d_l^\alpha d_l^\alpha + d_{lh} d_h^\alpha d_l^\alpha.
\]

It is easy to see that
\[
\frac{S_h}{S_l}
= \frac{d_{hh} d_h^\alpha d_h^\alpha + d_{hl} d_h^\alpha d_l^\alpha}{d_{ll} d_l^\alpha d_l^\alpha + d_{lh} d_h^\alpha d_l^\alpha}
= \frac{d_{hh} (d_h/d_l)^\alpha + d_{hl}}{d_{ll} (d_h/d_l)^{-\alpha} + d_{lh}}.
\]

Since $d_h > d_l$ (by definition)
and $d_{hh}, d_{hl}, d_{lh}, d_{ll} > 0$ are constant in $\alpha$, we have
${S_h}/{S_l}$ increases in $\alpha$, 
diverges to $+\infty$ as $\alpha \rightarrow +\infty$, and
converges to $0$ as $\alpha \rightarrow -\infty$.

Thus, 
\[
s_h = \frac{S_h}{n_h S_h + n_l S_l} 
= \frac{1}{n_h + n_l (S_h/S_l)^{-1}}
\]
increases in $\alpha$, and
\[
s_l = \frac{S_l}{n_h S_h + n_l S_l} 
= \frac{1}{n_h (S_h/S_l) + n_l}
\]
decreases in $\alpha$.
Finally, $s_h(-\infty) = 0$, $s_l(-\infty) = 1/n_l$, $s_h(+\infty) = 1/n_h$, and $s_l(+\infty) = 0$. \hfill\qed

\paragraph{Proof of Proposition \ref{cor-wisdom}}
\label{proof:cor_wisdom}
The result follows immediately from Corollary~\ref{cor:influence} and Proposition~\ref{prop-influence}.
\hfill\qed

\subsection{Convergence Speed: Explicit Characterization and Proofs}
\label{sec:proofs_convergence_speed}

Recall that in Section \ref{sec:speed} we assume that there are $m$ groups such that $n_1 \neq n_2 = \ldots = n_m$, with linking probabilities $p_{kk} = p$ and $p_{kl} = q$, where $p \neq q$. The expected degrees are then given by \( d_1^* = n_1p+(m-1)n_2q \) and \( d_2^* = n_1q+n_2p+(m-2)n_2q \).
For any weighting function \( \phi \) that satisfies Properties $\ref{property_4}$--$\ref{property_6}$, let $g(\alpha)$ denote the relative weight, as follows:
\[
g(\alpha) = \frac{\phi(\alpha,d_2^*)}{\phi(\alpha,d_1^*)}.
\]

We obtain the following explicit characterization of convergence speed,  
which immediately implies Propositions \ref{prop-1-e-1-g} and \ref{prop-1vm} in the main text (Section \ref{sec:speed}).

\begin{lemma} 
\label{speed:specific} The speed of convergence, negatively related to \( |\lambda_2| \), is as follows.
\begin{itemize}
    \item When \( m=2 \) ($m_h = m_l = 1$),
    \begin{equation}
    \label{lambda2m2_g}
    |\lambda_2(\matr{T}^*)|=\left|
    \frac{n_1p\,\phi(\alpha,d_1^*)}{n_1p\,\phi(\alpha,d_1^*)+n_2q\,\phi(\alpha,d_2^*)}
    -\frac{n_1q\,\phi(\alpha,d_1^*)}{n_1q\,\phi(\alpha,d_1^*)+n_2p\phi(\alpha,d_2^*)}
    \right|.
    \end{equation}
    
    \item When \( m\ge 3 \) and \( d_1^* < d_2^* \) ($m_h > 1,  m_l = 1$),
    \begin{equation}
    |\lambda_2(\matr{T}^*)|=
    \begin{cases}
    \left|\frac{n_1p\,\phi(\alpha,d_1^*)}{n_1p\,\phi(\alpha,d_1^*)+(m-1)n_2q\,\phi(\alpha,d_2^*)}
    -\frac{n_1q\,\phi(\alpha,d_1^*)}{n_1q\,\phi(\alpha,d_1^*)+(n_2p+(m-2)n_2q)\phi(\alpha,d_2^*)}\right|,
    &\alpha\leq g^{-1}\!\left(\tfrac{n_1}{n_2}\right),\\[1ex]
    \left|\frac{n_2(p-q)\,\phi(\alpha,d_2^*)}{n_1q\,\phi(\alpha,d_1^*)+(n_2p+(m-2)n_2q)\phi(\alpha,d_2^*)}\right|,
    &\alpha> g^{-1}\!\left(\tfrac{n_1}{n_2}\right),
    \end{cases}
    \label{lambda2m3_2}
    \end{equation}
    
    \item When \( m\ge 3 \) and \( d_1^* > d_2^* \) ($m_h = 1, m_l > 1$), 
    \begin{equation}
    |\lambda_2(\matr{T}^*)|=
    \begin{cases}
    \left|\frac{n_1p\,\phi(\alpha,d_1^*)}{n_1p\,\phi(\alpha,d_1^*)+(m-1)n_2q\,\phi(\alpha,d_2^*)}
    -\frac{n_1q\,\phi(\alpha,d_1^*)}{n_1q\,\phi(\alpha,d_1^*)+(n_2p+(m-2)n_2q)\phi(\alpha,d_2^*)}\right|,
    &\alpha\geq g^{-1}\!\left(\tfrac{n_1}{n_2}\right),\\[1ex]
    \left|\frac{n_2(p-q)\,\phi(\alpha,d_2^*)}{n_1q\,\phi(\alpha,d_1^*)+(n_2p+(m-2)n_2q)\phi(\alpha,d_2^*)}\right|,
    &\alpha< g^{-1}\!\left(\tfrac{n_1}{n_2}\right).
    \end{cases}
    \label{lambda2m3_1}
    \end{equation}
\end{itemize}
\end{lemma}

\paragraph{Proof of Lemma \ref{speed:specific}} The proof involves lengthy calculations and is provided in Online Appendix~O2.
\hfill\qed\\

With \( |\lambda_2| \) fully characterized, we now analyze how  \( \alpha \) affects convergence speed.

\begin{proposition}[Non-monotonic Impact of Degree-Dependence, \(\alpha\)]
\label{impact_of_alpha}
Let $\mathcal{D}_\alpha$ be the region where \(|\lambda_2(\mathbf{T}^*)|\) is decreasing in \(\alpha\) for \(\alpha \in \mathcal{D}_\alpha\) and increasing for \(\alpha \notin \mathcal{D}_\alpha\). Then, \(\mathcal{D}_\alpha\) is as follows:
\begin{equation}\label{dalpha}
\mathcal{D}_\alpha=\begin{cases}
\big[g^{-1}\!\left(\tfrac{n_1}{n_2}\right),\infty\big), & \text{if } m=2, \\[1ex]
(-\infty,g^{-1}\!\left(\tfrac{n_1}{n_2}\right)]\cup\bigg[g^{-1}\!\left(\tfrac{n_1}{n_2}\Big(\tfrac{p}{(m-1)(p+(m-2)q)}\Big)^{1/2}\right),\infty\bigg), & \text{if } m\geq 3, \, d_1^*>d_2^*, \\[1ex]
\bigg(g^{-1}\!\left(\tfrac{n_1}{n_2}\Big(\tfrac{p}{(m-1)(p+(m-2)q)}\Big)^{1/2}\right),g^{-1}\!\left(\tfrac{n_1}{n_2}\right)\bigg], & \text{if } m\geq 3, \, d_1^*<d_2^*.
\end{cases}
\end{equation}
\end{proposition}

\paragraph{Proof of Proposition \ref{impact_of_alpha}} See Online Appendix~O2 
for details.
\hfill\qed

\end{appendix}

\newpage

\bibliographystyle{aea}

\bibliography{Social_Learning_2023.bib}


\end{document}